\newcommand{\NP}{{\sf NP}\xspace}
\newcommand{\Acal}{\mathcal{A}}
\newcommand{\W}{{\sf W}\xspace}
\theoremstyle{plain}
\newtheorem{proposition}[theorem]{Proposition}
\newenvironment{proof-claim}[1][]{\par \noindent {\textcolor{darkgray}{\sffamily\bfseries Proof of the claim.#1}}\ }{\hfill$\Box$\bigskip}
\newenvironment{proof-sketch}[1][]{\par \noindent {\textcolor{darkgray}{\sffamily\bfseries Sketch of proof.#1}}\ }{\hfill$\Box$\bigskip}
\newenvironment{proof-DAG}[1][]{\par \noindent {\textcolor{darkgray}{\sffamily\bfseries Proof of Theorem~\ref{thm:algoDAG}.#1}}\ }{\hfill$\Box$\bigskip}
\newcommand{\np}{{\sf NP}\xspace}
\newcommand{\fpt}{{\sf FPT}\xspace}
\newcommand{\xp}{{\sf XP}\xspace}
\newcommand{\igjournal}[1]{}
\newtheorem{claim}{Claim}
\title{On the complexity of finding internally vertex-disjoint long directed paths\footnote{Emails of authors: \texttt{julio@mat.ufc.br}, \texttt{campos@lia.ufc.br}, \texttt{karolmaia@lia.ufc.br}, \texttt{ignasi.sau@lirmm.fr}, \texttt{anasilva@mat.ufc.br}. \newline Work supported by DE-MO-GRAPH grant ANR-16-CE40-0028 and CNPq grant 306262/2014-2.}}
\titlerunning{On the complexity of finding internally vertex-disjoint long directed paths}
\author[1]{J\'ulio Ara\'ujo}
\author[2]{Victor A. Campos}
\author[2]{Ana Karolinna Maia}
\author[1,3]{Ignasi Sau}
\author[1]{Ana Silva}
\affil[1]{Departamento de Matemática, ParGO research group, Universidade Federal do Ceará, Fortaleza, Brazil}
\affil[2]{Departamento de Computa\c{c}\~{a}o, ParGO research group, Universidade Federal do Ceará, Fortaleza, Brazil}
\affil[3]{CNRS, AlGCo project team, LIRMM, Montpellier, France}
\authorrunning{J. Ara\'ujo, V. A. Campos, A. K. Maia, I. Sau, and A. Silva} 
\subjclass{F.2.2 Nonnumerical Algorithms and Problems, G.2.2 Graph Theory.} 
\keywords{digraph subdivision; spindle; parameterized complexity; {\sf FPT} algorithm; representative family; complexity dichotomy.}
\begin{document}

\maketitle


\begin{abstract}
For two positive integers $k$ and $\ell$, a \emph{$(k \times \ell)$-spindle} is the union of $k$ pairwise internally vertex-disjoint directed paths with $\ell$ arcs between two vertices $u$ and $v$. We are interested in the  (parameterized) complexity of
several problems consisting in deciding whether a given digraph contains a
subdivision of a spindle, which generalize both the
\textsc{Maximum Flow} and \textsc{Longest Path} problems. We obtain the following complexity dichotomy: for a fixed $\ell \geq 1$, finding the largest $k$ such that an input digraph $G$ contains a subdivision of a $(k \times \ell)$-spindle is polynomial-time solvable if $\ell \leq 3$, and \NP-hard otherwise. We place special emphasis on finding spindles with exactly two paths and present {\sf FPT} algorithms that are asymptotically optimal under the {\sf ETH}. These algorithms are based on the technique of representative families in matroids, and use also color-coding as a subroutine. Finally, we study the case where the input graph is acyclic, and present several algorithmic and hardness results.

\end{abstract}

\section{Introduction}
\label{sec:intro}

A \emph{subdivision} of a digraph $F$
is a digraph obtained from $F$ by replacing each arc $(u,v)$ of $F$ by a directed $(u,v )$-path. We are interested in the (parameterized) complexity of several problems consisting in deciding whether a given digraph contains as a subdigraph a subdivision of a \emph{spindle}, defined as follows.
For $k$ positive integers $\ell_1, \ldots, \ell_k$, a $(\ell_1, \ldots, \ell_k)$-\emph{spindle} is the digraph containing $k$ paths $P_1, \ldots, P_k$ from a vertex $u$ to a vertex $v$, such that $|E(P_i)| = \ell_i$ for $1 \leq i \leq k$ and $V(P_i) \cap V(P_j) = \{u,v\}$ for $1 \leq i \neq j \leq k$. If $\ell_i = \ell$ for $1 \leq i \leq k$, a $(\ell_1, \ldots, \ell_k)$-spindle is also called a  $(k \times \ell)$-\emph{spindle}. See Figure~\ref{fig:example} for an example.

\begin{figure}[h!]
\begin{center}\vspace{-.1cm}
\includegraphics[width=0.35\textwidth]{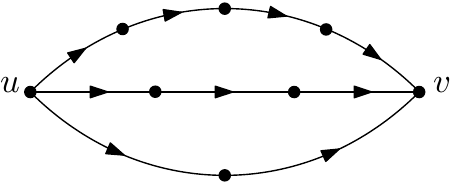}
\end{center}\vspace{-.15cm}
\caption{A $(4,3,2)$-spindle. This digraph contains a subdivision of a $(3 \times 2)$-spindle, but not of a $(3 \times 3)$-spindle.}
\label{fig:example}
\end{figure}

 Note that a digraph $G$ contains a subdivision of a $(k,1)$-spindle if and only if there exist two vertices $u$ and $v$ and $k$ internally vertex-disjoint paths from $u$ to $v$. On the other hand, $G$ contains a subdivision of a $(1, \ell)$-spindle if and only if $G$ contains a path of length at least $\ell$.
 Hence, finding a subdivision of a spindle generalizes both the
\textsc{Maximum Flow} and \textsc{Longest Path} problems.

Subdivisions of spindles were considered by Bang-Jensen et al.~\cite{BHM15}, who introduced the general problem of finding a subdivision of a {\sl fixed} digraph $F$ and presented {\sf NP}-hardness results and polynomial-time algorithms for several choices of $F$. In particular, they proved that when $F$ is a spindle, the problem can be solved in time $n^{O(|V(F)|)}$ by a simple combination of brute force and a flow algorithm. Using terminology from parameterized complexity, this means that the problem is in {\sf XP} parameterized by the size of $F$, and they left open whether it is {\sf FPT}. Note that on undirected graphs the notion of subdivision coincides with that of topological minor, hence by the results of Grohe et al.~\cite{GroheKMW11} the problem is {\sf FPT} parameterized by the size of $F$, for a general digraph $F$. We refer to the introduction of~\cite{BHM15} for a more detailed discussion about problems related to containment relations on graphs and digraphs.

\medskip

We first consider the following two optimization problems about finding subdivisions of spindles:
\begin{itemize}
\item[(1)] for a fixed positive integer $k$, given an input digraph $G$, find the largest integer $\ell$ such that $G$ contains a subdivision of a $(k \times \ell)$-spindle, and
\item[(2)] for a fixed positive integer $\ell$, given an input digraph $G$, find the largest integer $k$ such that $G$ contains a subdivision of a $(k \times \ell)$-spindle.
\end{itemize}
We call these problems \textsc{Max $(k \times \bullet)$-Spindle Subdivision} and \textsc{Max $(\bullet \times \ell)$-Spindle Subdivision}, respectively.
We prove that the first problem is \NP-hard for any integer $k \geq 1$, by a simple reduction from \textsc{Longest Path}. The second problem turns out to be much more interesting, and we achieve the following dichotomy.

\begin{theorem}\label{thm:dichotomy}
Let $\ell \geq 1$ be a fixed integer. \textsc{Max $(\bullet \times \ell)$-Spindle Subdivision} is polynomial-time solvable if $\ell \leq 3$, and \NP-hard otherwise, even restricted to acyclic digraphs.
\end{theorem}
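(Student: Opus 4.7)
The plan is to treat the two sides of the dichotomy separately. For the polynomial side, I would, for every ordered pair $(u,v)$ of vertices, compute the maximum number $k_{u,v}$ of internally vertex-disjoint $(u,v)$-paths of length at least $\ell$ and output the overall maximum; since there are $O(n^2)$ pairs, it suffices to make each $k_{u,v}$ computable in polynomial time. For $\ell=1$, $k_{u,v}$ equals the local vertex-connectivity from $u$ to $v$ and follows from a standard node-capacitated max-flow. For $\ell=2$, delete the arc $(u,v)$ if it exists and apply the same flow, since that arc is the only length-$1$ path. For $\ell=3$, a $(u,v)$-path of length $\geq 3$ corresponds bijectively to a vertex-disjoint path in $H := G \setminus \{u,v\}$ from some $a \in A := N^+(u)\setminus\{v\}$ to some $b \in B := N^-(v) \setminus\{u\}$ with $a \neq b$ (i.e., using at least one arc of $H$). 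The only vertices that could contribute \emph{trivial} length-$0$ $A$-$B$ paths are those in $A \cap B$, so I would solve a minimum-cost maximum-flow problem in the node-split auxiliary graph, assigning unit cost to each flow of the form $s \to w \to t$ through a vertex $w \in A \cap B$; then $k_{u,v}$ equals the maximum flow minus the minimum cost.

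For the hardness side, I would reduce from a bounded-occurrence variant of \textsc{3-SAT} in which each variable appears in at most two positive and at most two negative clauses (\NP-hard by the usual variable-splitting trick). For $\ell = 4$, given $\phi$ with $n$ variables and $m$ clauses, construct a DAG $G_\phi$ with distinguished vertices $u,v$ as follows. For each variable $x_i$, introduce a shared ``center'' vertex $C_i$, two copies $T_i^1,T_i^2$ (each to be consumed by at most one clause containing $\neg x_i$), two copies $F_i^1,F_i^2$ (analogous for $x_i$), and the two mutually-exclusive length-$4$ paths
\[ P_i^T = u \to T_i^1 \to C_i \to T_i^2 \to v, \qquad P_i^F = u \to F_i^1 \to C_i \to F_i^2 \to v, \]
which share $C_i$. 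For each clause $C_j$ with literals $\ell_{j,1}, \ell_{j,2}, \ell_{j,3}$, introduce two shared vertices $c_j, d_j$ and three mutually-exclusive length-$4$ paths $P_j^k = u \to c_j \to L_{j,k} \to d_j \to v$, where $L_{j,k}$ is a distinct copy of $F_{i(k)}$ if $\ell_{j,k}=x_{i(k)}$, or a distinct copy of $T_{i(k)}$ if $\ell_{j,k}=\neg x_{i(k)}$. Since at most one variable path is chosen per variable and at most one clause path per clause, the number of internally vertex-disjoint $(u,v)$-paths of length $\geq 4$ is at most $n+m$, with equality requiring a truth assignment under which every clause has a literal whose corresponding copy is unblocked by the selected variable paths---i.e., iff $\phi$ is satisfiable. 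A case analysis shows that the competing strategies---using a ``mixed'' length-$4$ path such as $u \to T_i^1 \to C_i \to F_i^2 \to v$ in place of a variable path, or leaving a variable slot empty---yield at most $\max(n+s,m) < n+m$ paths when $\phi$ is unsatisfiable, where $s$ is the maximum number of simultaneously satisfiable clauses. The graph is a DAG via the layering $u < \{c_j, T_i^1, F_i^1\} < C_i < \{T_i^2, F_i^2\} < d_j < v$. For $\ell > 4$, prepending to each path a private chain of $\ell - 4$ fresh vertices yields an equivalent construction with all designed paths of length exactly $\ell$, and since the new vertices are private no cross-gadget paths are created.

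The main obstacle will be the polynomial-time algorithm for $\ell = 3$: the constraint ``source $\neq$ sink'' for $A$-$B$ paths does not fit immediately into any standard max-flow framework because a vertex in $A \cap B$ is a priori free to be used as a length-$0$ $A$-$B$ path, and preventing this cleanly seems to require the min-cost flow penalty described above rather than a direct structural modification. On the hardness side, the principal verification is that any length-$4$ ``mixed'' path necessarily passes through some center vertex $C_i$, and hence occupies the variable slot of $x_i$; a careful accounting of how each such choice blocks literal-vertex copies then shows that the maximum equals $n+m$ if and only if $\phi$ is satisfiable.
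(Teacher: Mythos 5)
Both halves of your proposal contain genuine gaps, and the obstacles you flag at the end are, unfortunately, exactly where the arguments break.

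\textbf{Polynomial side ($\ell=3$).} The formula ``$k_{u,v}$ equals the maximum flow minus the minimum cost'' is incorrect, because a minimum-cost maximum flow fixes the flow value at its maximum and only then minimizes cost, whereas what you need is to maximize the quantity $(\text{flow value}) - (\text{number of trivial paths})$, and these two optimizations can diverge. Concretely, take $H$ with $A=B=\{w_1,w_2\}$ and a single arc $(w_1,w_2)$: the maximum flow is $2$ (two trivial paths through $w_1$ and $w_2$), every flow of value $2$ has cost $2$, so your formula returns $2-2=0$, yet the nontrivial path $w_1\to w_2$ shows the correct answer is $1$. There is also a modelling problem: ``unit cost per flow of the form $s\to w\to t$'' is not directly expressible with arc costs, since no arc of the node-split network is used \emph{only} by the trivial path through $w$ (the split arc $w_{\mathrm{in}}\to w_{\mathrm{out}}$ is shared by every path through $w$). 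The paper sidesteps both issues: Proposition~\ref{prop:Tpaths} reduces the maximum number of vertex-disjoint \emph{nontrivial} directed $X$--$Y$ paths to a single maximum-matching computation in an auxiliary undirected graph built by a vertex-splitting trick, which cleanly encodes ``source $\neq$ sink''.

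\textbf{Hardness side ($\ell\geq 4$).} The variable gadget does not enforce a consistent truth assignment, precisely because of the mixed paths. A path $u\to T_i^1\to C_i\to F_i^2\to v$ frees $F_i^1$ \emph{and} $T_i^2$ simultaneously, so a positive and a negative occurrence of $x_i$ can both be ``satisfied''. Counterexample: let $\phi=(x\lor y)\land(x\lor\neg y)\land(\neg x\lor z)\land(\neg x\lor\neg z)$, which is unsatisfiable, with $n=3$, $m=4$. Assign $F_x^1,F_x^2$ to clauses $1,2$; $T_x^1,T_x^2$ to clauses $3,4$; $F_y^1,T_y^1$ to clauses $1,2$; $F_z^1,T_z^1$ to clauses $3,4$. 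Then the seven length-$4$ paths
$u\to T_x^1\to C_x\to F_x^2\to v$, \
$u\to F_y^1\to C_y\to F_y^2\to v$, \
$u\to T_z^1\to C_z\to T_z^2\to v$, \
$u\to c_1\to F_x^1\to d_1\to v$, \
$u\to c_2\to T_y^1\to d_2\to v$, \
$u\to c_3\to F_z^1\to d_3\to v$, \
$u\to c_4\to T_x^2\to d_4\to v$
are pairwise internally vertex-disjoint, giving a $(7\times 4)$-spindle subdivision with $7=n+m$, contradicting your claimed bound $\max(n+s,m)<n+m$ (here $s=3$, so the claimed bound is $6$). The reduction is therefore not sound. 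The paper reduces instead from \textsc{3-Dimensional Matching} and relies on a tight counting argument: $|V(G)\setminus\{s,t\}|=3k^*$, so every vertex must lie on one of the $k^*$ paths, and this rigidity forces each triple gadget into exactly one of two configurations (Figure~\ref{fig:reduction-4}). Your construction has no analogous rigidity, which is exactly what the mixed paths exploit.
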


The reduction for the \NP-hard cases is inspired by a result of Brewster et al.~\cite{BrewsterHPRY03} to prove the \NP-hardness of packing vertex-disjoint paths on digraphs. Concerning the polynomial algorithms, to solve the case $\ell = 3$, which is the only non-trivial one, we use a  \emph{vertex splitting procedure} that builds on ideas of Schrijver~\cite{Schrijver01} on undirected graphs and by Kriesell~\cite{Kri05} on directed graphs (see also~\cite[Section 5.9]{BJ-Gutin-book}).

It it worth mentioning that both the positive and negative results of Theorem~\ref{thm:dichotomy} hold as well for the case where the endvertices of the desired spindle are {\sl fixed}. Itai et al.~\cite{IPS82} considered the problems of,  given a digraph $G$ and two distinct vertices $s$ and $t$, finding the maximum number of internally vertex-disjoint $(s,t)$-paths whose lengths are {\sl at most} or {\sl exactly equal} to a fixed constant $\ell$, and achieved dichotomies for both cases. Note that the problem we consider corresponds to a constraint of type `{\sl at least}' on the lengths of the desired paths. Hence, Theorem~\ref{thm:dichotomy} together with the results of Itai et al.~\cite{IPS82}  provide a full picture of the complexity of finding a maximum number of length-constrained internally vertex-disjoint directed $(s,t)$-paths.

\medskip

We place special emphasis on finding subdivisions of spindles with exactly two paths, which we call \emph{$2$-spindles}. The existence of subdivisions of 2-spindles has attracted some interest in the literature. Indeed, Benhocine and Wojda~\cite{BenhocineW83} showed that a tournament on $n \geq 7$ vertices always contains a subdivision of a $(\ell_1, \ell_2)$-spindle such that $\ell_1 + \ell_2 = n$.  More recently, Cohen et al.~\cite{two-blocks-1} showed that a strongly connected digraph with chromatic number $\Omega((\ell_1 + \ell_2)^4)$ contains a subdivision of a $(\ell_1, \ell_2)$-spindle, and this bound was subsequently improved to $\Omega((\ell_1 + \ell_2)^2)$ by Kim et al.~\cite{two-blocks-2}, who also provided improved bounds for Hamiltonian digraphs.
%

We consider two  problems concerning the existence of subdivisions of 2-spindles. The first one is, given an input digraph $G$, find the largest integer $\ell$ such that $G$ contains a subdivision of a $(\ell_1,\ell_2)$-spindle with $\min\{\ell_1,\ell_2\} \geq 1$ and $\ell_1 + \ell_2 = \ell$. We call this problem \textsc{Max $(\bullet,\bullet)$-Spindle Subdivision}, and we show the following results.

\begin{theorem}\label{thm:2blockFPTsum}
Given a digraph $G$ and a positive integer $\ell$, the problem of deciding whether there exist two strictly positive integers $\ell_1,\ell_2$ with $\ell_1+\ell_2 = \ell$ such that $G$ contains a subdivision of a $(\ell_1,\ell_2)$-spindle is {\sf NP}-hard and {\sf FPT} parameterized by $\ell$. The running time of the  {\sf FPT} algorithm is $2^{O(\ell)}\cdot n^{O(1)}$, which is asymptotically optimal unless the {\sf ETH}   fails. Moreover, the problem does not admit polynomial kernels unless ${\sf NP} \subseteq {\sf coNP} / {\sf poly}$.
\end{theorem}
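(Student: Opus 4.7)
The statement has four parts; I would handle the \NP-hardness and the ETH lower bound by the same reduction, the FPT algorithm by color-coding plus representative families, and the kernel lower bound by an OR-cross-composition.

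For the \NP-hardness and the ETH lower bound, I would reduce from directed \textsc{Hamiltonian $s$-$t$-Path}, which is \NP-hard and, under the ETH, admits no $2^{o(n)}$-time algorithm. Given an instance $(H,s,t)$ on $n$ vertices, form $G$ from $H$ by adding the single arc $(s,t)$ and set $\ell=n$. A subdivision of a $(\ell_1,\ell_2)$-spindle with $\ell_1,\ell_2\geq 1$ and $\ell_1+\ell_2=n$ between $s$ and $t$ in $G$ must use the new arc as one path (of length $1$) and, as the other path, an internally vertex-disjoint $(s,t)$-path of length at least $n-1$ in $H$, i.e., a Hamiltonian $(s,t)$-path. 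The converse is immediate. Iterating over all candidate endpoints, a $2^{o(\ell)}n^{O(1)}$-time algorithm for our problem would yield a $2^{o(n)}n^{O(1)}$-time algorithm for Hamiltonian path, contradicting the ETH.

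For the FPT algorithm I would combine color-coding with the representative families framework of Fomin, Lokshtanov, Panolan and Saurabh. A $(\ell_1,\ell_2)$-spindle with $\ell_1+\ell_2=\ell$ has exactly $\ell$ vertices ($2$ endpoints and $\ell-2$ internal ones). I would draw a random coloring $c:V(G)\to[\ell]$, derandomized via a perfect hash family of size $2^{O(\ell)}\log n$; with probability at least $\ell!/\ell^{\ell}=2^{-O(\ell)}$ it is \emph{colorful} on any fixed hypothetical 2-spindle. Fixing a candidate pair of endpoints $(u,v)$, for every $i\in[1,\ell-1]$ I would maintain a family $\mathcal{P}_i^{u,v}$ of color subsets of $[\ell]\setminus\{c(u),c(v)\}$ of size $i-1$, each witnessed by a colorful $(u,v)$-path of length $i$ whose internal colors form that set. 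The families are built by a standard DP over colorful prefixes and, at every step, thinned down to a $q$-representative subfamily over a uniform matroid on $[\ell]\setminus\{c(u),c(v)\}$; by the Bollob\'as inequality this subfamily has size at most $\binom{\ell-2}{i-1}\leq 2^{\ell}$, keeping the DP table of total size $2^{O(\ell)}n^{O(1)}$. Finally, for every split $\ell=\ell_1+\ell_2$ I would test whether there exist $C_1\in\mathcal{P}_{\ell_1}^{u,v}$ and $C_2\in\mathcal{P}_{\ell_2}^{u,v}$ with $C_1\cap C_2=\emptyset$; the representative-family property guarantees no false negatives. I expect the trickiest step to be setting up the DP so that the internal-disjointness of the two paths is captured \emph{exactly} by the disjointness of the color sets outside $\{c(u),c(v)\}$, which relies crucially on colorfulness.

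For the kernel lower bound I would design an OR-cross-composition from directed \textsc{Hamiltonian $s$-$t$-Path} to our problem, with $\ell$ bounded by a polynomial in the size of a single instance (independently of the number $t$ of instances). Given $t$ instances $(H_1,s_1,t_1),\ldots,(H_t,s_t,t_t)$ each on $n$ vertices, take the disjoint union of the $H_i$, add two new vertices $u,v$ together with arcs $(u,s_i)$ and $(t_i,v)$ for every $i\in[t]$, and a ``short'' path $u\to w\to v$ through a fresh vertex $w$. Setting $\ell=n+3$, a $(\ell_1,\ell_2)$-spindle subdivision between $u$ and $v$ with $\ell_1+\ell_2=\ell$ must use the short path (of length $2$) as one side and, as the other side, a path of length $n+1$ that enters a single $H_i$ from $s_i$ and leaves it at $t_i$, which corresponds to a Hamiltonian $(s_i,t_i)$-path in $H_i$. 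Invoking the framework of Bodlaender--Jansen--Kratsch then rules out polynomial kernels unless ${\sf NP}\subseteq{\sf coNP}/{\sf poly}$.
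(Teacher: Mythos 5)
There are genuine gaps in both the reduction and the algorithm.

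\textbf{The FPT algorithm cannot handle spindles with more than $\ell$ vertices.} A subdivision of a $(\ell_1,\ell_2)$-spindle with $\ell_1+\ell_2=\ell$ consists of two internally disjoint $(u,v)$-paths of lengths \emph{at least} $\ell_1$ and $\ell_2$; the witness can therefore be arbitrarily large. Consider, for instance, a digraph that is exactly a $(100,100)$-spindle and take $\ell=10$: the answer is \textsc{Yes} (it is a subdivision of a $(1,9)$-spindle), but every good spindle in this digraph has $200$ vertices, so no coloring with $\ell=10$ colors can be colorful on a witness. Your DP builds families $\mathcal{P}_i^{u,v}$ only for $i\le \ell-1$ and then tests disjointness of two such color sets, which only detects spindles of total length exactly $\ell$. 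The paper avoids color-coding entirely for this theorem and instead keeps $q$-representative families, in the uniform matroid on $V(G)$ of rank $\ell+q$, of the vertex sets $\mathcal{S}_{u,u_1,u_2}^{\ell_1,\ell_2}$ covering only the first $\ell-1$ vertices near the tail; the remainder of a (possibly very long) spindle is then completed by a flow computation. Lemma~\ref{lem:key-lemma} proves via a minimality argument on $|V(S)|$ that this truncated representation suffices. Without such a truncation-plus-completion step, your plan fails on the example above.

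\textbf{The NP-hardness reduction does not control the endpoints of the spindle.} You set $\ell=n=|V(H)|$ and add the arc $(s,t)$, claiming the spindle ``must use the new arc as one path.'' Nothing in the construction forces the spindle to have $s$ and $t$ as endpoints, nor to use $(s,t)$ at all: $G$ could contain a 2-spindle subdivision of total length $n$ entirely inside $H$, between two other vertices, while $H$ has no Hamiltonian $(s,t)$-path. For example, if $H$ itself is a $(3,3)$-spindle between vertices $u,v$ and $s,t$ are two internal vertices on different sides, then $G=H+(s,t)$ contains a 2-spindle subdivision using all $n=6$ vertices, yet $H$ has no Hamiltonian $(s,t)$-path. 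The paper's Proposition~\ref{prop:two-blocks-hard} prevents exactly this: it deletes all arcs into $s$ and out of $t$ (so $s$ is a source and $t$ a sink), adds a fresh degree-two vertex $v$ with only the arcs $(s,v),(v,t)$, and sets $\ell=n+1$; then $v$ must lie on one side of the spindle, forcing $(s,v,t)$ to be a full side and pinning the endpoints to $s$ and $t$.

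\textbf{On the kernel lower bound.} Your cross-composition from Hamiltonian path introduces a combined gadget with the same endpoint-control issues as above. The paper uses something much simpler: since the target problem is itself NP-hard, take the disjoint union of $t$ instances. Because a spindle subdivision is connected, the disjoint union is a \textsc{Yes}-instance iff some input instance is, and $\ell$ is unchanged, so this is already a cross-composition from the problem to itself.

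Of the four claims, only the (standard) derandomized color-coding machinery and the general shape of the ETH argument are salvageable as stated; the core FPT algorithm and the hardness reductions need to be repaired along the lines sketched above.
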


The second problem is, for a fixed strictly positive integer $\ell_1$, given an input digraph $G$, find the largest integer $\ell_2$ such that $G$ contains a subdivision of a $(\ell_1,\ell_2)$-spindle. We call this problem \textsc{Max $(\ell_1,\bullet)$-Spindle Subdivision}, and we show the following results.

\begin{theorem}\label{thm:2blockFPT-2length}
Given a digraph $G$ and two integers $\ell_1, \ell_2$ with $\ell_2 \geq \ell_1 \geq 1$, the problem of deciding whether $G$ contains a subdivision of a $(\ell_1,\ell_2)$-spindle can be solved in time $2^{O(\ell_2)}\cdot n^{O(\ell_1)}$. When $\ell_1$ is a constant, the problem remains {\sf NP}-hard and the running time of the {\sf FPT} algorithm parameterized by $\ell_2$ is asymptotically optimal unless the {\sf ETH}  fails. Moreover, the problem does not admit polynomial kernels unless ${\sf NP} \subseteq {\sf coNP} / {\sf poly}$.
\end{theorem}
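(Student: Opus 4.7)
The statement bundles four claims—an algorithm in time $2^{O(\ell_2)}\cdot n^{O(\ell_1)}$, \NP-hardness for constant $\ell_1$, a matching {\sf ETH} lower bound, and no polynomial kernel unless $\NP\subseteq {\sf coNP}/{\sf poly}$—and my plan treats each in turn.

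For the algorithm, the natural idea is to enumerate the shorter arm of the sought spindle and then use color-coding plus representative families to search for the longer one. Given the ordered pair of endpoints $(u,v)$, the shorter arm has length at least $\ell_1$ but may be strictly longer, so one cannot simply enumerate length-$\ell_1$ $(u,v)$-paths in $G$ (some digraphs only admit spindles in which both arms are long, e.g.\ a pair of parallel arms of lengths $3$ and $4$ when $(\ell_1,\ell_2)=(2,3)$). The plan is to instead guess, for each ordered pair $(u,v)$, the first $\ell_1$ vertices $u=q_0,q_1,\ldots,q_{\ell_1}=w$ of the shorter arm (giving $n^{O(\ell_1)}$ guesses), and then, via representative families, to search for an internally vertex-disjoint $(u,v)$-path of length at least $\ell_2$ that moreover preserves $(w,v)$-reachability in the remaining digraph. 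The long-path search is the classical $2^{O(\ell_2)}\cdot n^{O(1)}$-time routine of Alon-Yuster-Zwick refined by Fomin-Lokshtanov-Panolan-Saurabh; the $(w,v)$-reachability constraint fits inside the partition-matroid framework at only polynomial overhead. Multiplying the two bounds gives the claimed running time.

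For the \NP-hardness with constant $\ell_1$, together with the matching {\sf ETH} lower bound, a single reduction from \textsc{Longest $(s,t)$-Path} on digraphs suffices. Given an input $(H,s,t,k)$, construct $G$ by taking the disjoint union of $H$ with a new internally vertex-disjoint $(s,t)$-path of length exactly $\ell_1$; then $G$ contains a subdivision of a $(\ell_1,k)$-spindle if and only if $H$ has an $(s,t)$-path of length at least $k$. Since \textsc{Longest $(s,t)$-Path} is \NP-hard and admits no $2^{o(|V(H)|)}$-time algorithm under the {\sf ETH}, and since $\ell_2=k$ can be chosen linear in $|V(H)|$, this reduction simultaneously yields \NP-hardness for every fixed $\ell_1\geq 1$ and essentially matches the upper bound of the algorithm.

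For the absence of polynomial kernels, the plan is an \textsf{OR}-cross-composition from \textsc{Longest $(s,t)$-Path} (parameterized by $k$) in the sense of Bodlaender-Jansen-Kratsch. Given $t$ input instances $(H_i,s_i,t_i,k)$, form $G$ by identifying all $s_i$ into a single vertex $s$ and all $t_i$ into a single vertex $t$, and attach a fresh internally vertex-disjoint length-$\ell_1$ $(s,t)$-path. The resulting digraph admits a subdivision of a $(\ell_1,k)$-spindle if and only if some $H_i$ contains an $(s_i,t_i)$-path of length at least $k$, while the parameter $\ell_1+k$ is bounded by the size of a single input and is independent of $t$; by the cross-composition framework this rules out polynomial kernels unless $\NP\subseteq{\sf coNP}/{\sf poly}$. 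The main obstacle I expect is the algorithmic step: arranging the representative-families search for the long arm so that it correctly accounts for the reachability condition imposed by the guessed prefix of the short arm.
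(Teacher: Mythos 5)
Your plan for the algorithm---guess the first $\ell_1$ vertices of the short arm, then search for the long arm via representative families---matches the paper's overall structure, but the step you flag as the ``main obstacle'' is precisely where the paper's work lies, and your sketch does not resolve it. The phrase ``the $(w,v)$-reachability constraint fits inside the partition-matroid framework'' is not a proof: representative families certify that if \emph{some} candidate set avoids a \emph{fixed} small set $B$, then a stored representative also avoids $B$. But here the completion of the short arm past $w$ is itself part of the solution, quantified jointly with the long arm, so it cannot play the role of the fixed $B$ against which representativity is certified. The paper closes this gap with two ingredients you are missing. First, a preprocessing phase uses color-coding directly to detect any spindle in which both arms have at most $2\ell_2$ vertices (such a spindle has $O(\ell_2)$ vertices in total, hence time $2^{O(\ell_2)}\cdot n^{O(1)}$). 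Second, once all remaining spindles are guaranteed to have a ``long'' arm, Lemma~\ref{lem:key-lemma2} establishes via a minimum-counterexample argument that a set $B$ consisting of the last $\ell_1$ vertices of the short arm together with the last $\ell_2$ vertices of the long arm can serve as the certifying set, and any stored representative of the $\ell_2$-vertex \emph{prefix} of the long arm that fits this $B$ can be rerouted into an actual good spindle with no more vertices. Only that fixed-size prefix lives in the representative family; the remainders of both arms are completed afterward by a flow computation, given the guessed $\ell_1$-vertex prefix of the short arm. The minimality argument is what replaces the reachability constraint you were hoping to encode.

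Two smaller points. Your \NP-hardness reduction omits deleting the arcs entering $s$ and leaving $t$ before attaching the fresh $\ell_1$-path; without this, the backward direction requires an extra argument that a spindle subdivision in $G$ must have $s$ and $t$ as its endpoints (the paper reduces from \textsc{Directed Hamiltonian $(s,t)$-Path} and performs exactly this deletion). For the kernel lower bound, the paper simply takes the disjoint union of $t$ instances of the problem itself---an OR-composition from the problem to itself, immediately sound because the sought spindle is connected and so lives in a single component---whereas your terminal-identification construction would additionally require you to rule out $(s,t)$-paths weaving through several $H_i$.
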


The hardness results of Theorems~\ref{thm:2blockFPTsum} and~\ref{thm:2blockFPT-2length} are based on a simple reduction from \textsc{Directed Hamiltonian Cycle}. Both {\sf FPT} algorithms, which are our main technical contribution,  are based on the technique of \emph{representative families} in matroids introduced by Monien~\cite{Monien85}, and in particular its improved version recently presented by
Fomin et al.~\cite{FominLPS16}. The {\sf FPT} algorithm of Theorem~\ref{thm:2blockFPT-2length} also uses the \emph{color-coding} technique of Alon et al.~\cite{AlonYZ95} as a subroutine.

%
%
%
%
%
%
%
%
%

\medskip

Finally, we consider the case where the input digraph $G$ is {\sl acyclic}. We prove the following result by using a standard dynamic programming algorithm.

\begin{theorem}\label{thm:algoDAG}
Given an acyclic digraph $G$ and two positive integers $k,\ell$, the problem of deciding whether $G$ contains a subdivision of a $(k \times \ell)$-spindle can be solved in time $O(\ell^k \cdot n^{2k+1})$.
\end{theorem}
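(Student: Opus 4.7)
The plan is to reduce the problem to a fixed choice of spindle endpoints and then decide the corresponding disjoint-paths question via standard dynamic programming on the topological order of $G$. Concretely, loop over all $O(n^2)$ ordered pairs $(u,v)$ of distinct vertices as candidate endpoints; for each such pair, it suffices to decide whether $G$ contains $k$ internally vertex-disjoint $(u,v)$-paths, each of length at least $\ell$. Fix such a pair $(u,v)$ and a topological ordering $\prec$ of $G$.

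Introduce a Boolean table $f(x_1,\ldots,x_k,c_1,\ldots,c_k)$ indexed by $k$-tuples of pairwise distinct vertices $x_i\in V(G)\setminus\{u,v\}$ and $k$-tuples of integers $c_i\in\{1,\ldots,\ell\}$, with the semantics that $f=1$ iff $G$ contains $k$ internally vertex-disjoint paths starting at $u$, with the $i$-th ending at $x_i$ and having length $\min\{|E(P_i)|,\ell\}=c_i$. The base cases $c_1=\cdots=c_k=1$ correspond to selecting $k$ distinct out-neighbors of $u$. We fill each cell in increasing order of $\max_i x_i$ under $\prec$ using the following rule: identify the (unique) index $i^{\star}$ for which $x_{i^{\star}}$ is the topologically latest head, and set $f=1$ iff some in-neighbor $y$ of $x_{i^{\star}}$ with $y\notin\{x_j:j\neq i^{\star}\}$ makes the state obtained by replacing $x_{i^{\star}}$ with $y$ and decrementing $c_{i^{\star}}$ (modulo the $\ell$-cap) evaluate to $1$. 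Finally, $(u,v)$ admits a $(k\times\ell)$-spindle subdivision iff there are $k$ pairwise distinct in-neighbors $y_1,\ldots,y_k$ of $v$ and values $c_i\geq \ell-1$ with $f(y_1,\ldots,y_k,c_1,\ldots,c_k)=1$.

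The main obstacle, and where I would place the bulk of the argument, is justifying the ``advance only the topologically latest head'' rule without explicitly tracking the internal vertices of the paths. The key observation is that, since $G$ is acyclic, every $(u,x_j)$-path visits vertices in increasing $\prec$-order, so the vertex set of path $j$ is contained in $\{z : z\preceq x_j\}$. Consequently, the topologically latest head $x_{i^{\star}}$ cannot lie on any other path; moreover, the predecessor of $x_{i^{\star}}$ on $P_{i^{\star}}$ is an internal vertex of $P_{i^{\star}}$ and is therefore disjoint from all other paths. This simultaneously yields a valid unrolling step (showing the recurrence is necessary) and a valid extension step (showing it is sufficient), so accepting DP derivations are in bijective correspondence with families of internally disjoint $(u,v)$-paths of the required lengths. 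Bounding by $O(n^k \cdot \ell^k)$ states per endpoint pair, $O(n)$ work per cell to enumerate in-neighbors, and $O(n^2)$ endpoint pairs then fits comfortably within the claimed $O(\ell^k \cdot n^{2k+1})$ bound.
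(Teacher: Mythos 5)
Your approach differs from the paper's in two ways: you work directly on $G$ (tracking vertex-tuples as states) rather than first applying the paper's vertex-splitting transformation to an auxiliary DAG $H$ (which converts internal-vertex-disjointness into arc-disjointness and lets the DP track last-arc tuples), and you fix both endpoints $u,v$ up front rather than only the tail. The central trick --- advance only the topologically latest endpoint, using acyclicity to argue that it and its predecessor are safe to unroll --- is the same. However, your version has a genuine gap that the paper's encoding avoids.

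The problem is that individual paths of the spindle do not finish unrolling at the same time, and your state space and recurrence have no way to ``park'' a path that has already reached length $1$ while others still need to be shortened. Concretely, consider $V(G)=\{u,a,b,c,v\}$ with arcs $(u,a),(u,b),(b,c),(a,v),(c,v)$, topological order $u\prec b\prec c\prec a\prec v$, and $\ell=k=2$. The spindle uses the paths $u\to a\to v$ and $u\to b\to c\to v$, so the final query is $f(a,c,1,2)$, which should be {\sf true}. But $a$ is the topologically latest of $\{a,c\}$ while $c_1=1$: your recurrence would attempt to replace $a$ by an in-neighbor $y\in N^-(a)=\{u\}$ and decrement $c_1$ to $0$, but $u\notin V(G)\setminus\{u,v\}$ and $c_1=0$ is outside your declared range $\{1,\dots,\ell\}$, so the recurrence is undefined and the entry is (incorrectly) never set. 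Your stated base case $c_1=\cdots=c_k=1$ only covers the event that \emph{all} paths hit length $1$ simultaneously, which need not happen. To fix this you would need to enlarge the state space with a ``done'' marker per slot (or recurse into a $(k-1)$-ary table once a slot reaches an out-neighbor of $u$ with $c_{i^\star}=1$) and compute the ``latest'' index only over live slots --- exactly the role played by the {\sf null} arc and $t_i=0$ in the paper's table $P_x(e_1,t_1,\ldots,e_k,t_k)$. Separately, ``decrementing $c_{i^\star}$ modulo the $\ell$-cap'' is not well-defined under your equality semantics $\min\{|E(P_i)|,\ell\}=c_i$: when $c_{i^\star}=\ell$ the predecessor state can have capped value $\ell-1$ \emph{or} $\ell$, so the recurrence must OR over both (or you must switch to ``length at least $c_i$'' semantics as the paper does with its $t_i$), and this also needs to be said explicitly. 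Once these two issues are patched, your direct-on-$G$ variant does yield a correct algorithm within the claimed bound for $k\ge 2$ (your count gives $O(\ell^k n^{k+3})$, which is $O(\ell^k n^{2k+1})$ only for $k\ge 2$; for $k=1$ a separate, easier argument is needed), so the route is salvageable but as written it is incomplete.
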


The above theorem implies, in particular, that when $k$ is a constant the problem is polynomial-time solvable on acyclic digraphs, which generalizes the fact that \textsc{Longest Path}, which corresponds to the case $k=1$, is polynomial-time solvable on acyclic digraphs (cf.~\cite{Book-algo}).

As observed by Bang-Jensen et al.~\cite{BHM15}, from the fact that the $k$-\textsc{Linkage} problem is in \xp on acyclic digraphs~\cite{Met93}, it easily follows that finding a subdivision of a general digraph $F$ is in \xp on DAGs parameterized by $|V(F)|$. Motivated by this, we prove two further hardness results about finding subdivisions of spindles on DAGs. Namely, we prove that
if $F$ is the disjoint union of $(2 \times 1)$-spindles, then finding a subdivision of $F$ is {\sf NP}-complete on planar DAGs, and that if $F$ is  the disjoint union of a $(k_1 \times 1)$-spindle and a $(k_2 \times 1)$-spindle, then finding a subdivision of $F$ is {\sf W}$[1]$-hard on DAGs parameterized by $k_1 + k_2$. These two results should be compared to the fact that finding a subdivision of a single $(k \times 1)$-spindle can be solved in polynomial time on general digraphs by a flow algorithm.

\medskip
\noindent \textbf{Organization of the paper}. In Section~\ref{sec:prelim} we provide some definitions about (di)graphs, parameterized complexity, and matroids.
In Section~\ref{sec:dichomoty} we prove Theorem~\ref{thm:dichotomy}, and
in Section~\ref{sec:two-paths} we prove Theorem~\ref{thm:2blockFPTsum} and Theorem~\ref{thm:2blockFPT-2length}. In Section~\ref{sec:DAGs} we focus on acyclic digraphs and we prove, in particular, Theorem~\ref{thm:algoDAG}. In Section~\ref{sec:further} we present some open problems for further research.


\section{Preliminaries}
\label{sec:prelim}

\noindent \textbf{Graphs and digraphs}. We use standard graph-theoretic notation, and we refer the reader to the books~\cite{Diestel12} and~\cite{BJ-Gutin-book} for any undefined notation about graphs and directed graphs, respectively.

A \emph{directed graph} $G$, or just \emph{digraph},  consists of a non-empty set $V(G)$ of elements called  \emph{vertices} and a finite (multi)set $A(G)$ of ordered pairs of distinct vertices called \emph{arcs}. All our positive results hold even for digraphs where multiple arcs between the same pair of vertices are allowed. We denote by $(u,v)$ an arc from a vertex $u$ to a vertex $v$. Vertex $u$ is called the \emph{tail} and vertex $v$ is called the \emph{head} of an arc $(u,v)$, and we say that $(u,v)$ is an arc \emph{outgoing} from $u$ and \emph{incoming} at $v$.

For a vertex $v$ in a digraph $G$, we let $N_{G}^+(v) = \{u \in V(G) \setminus  \{v\} : (v,u) \in A(G)\}$, $N_{G}^-(v) = \{w \in V(G) \setminus  \{v\} : (w,v) \in A(G)\}$, and $N_{G}(v) = N_{G}^+(v) \cup N_{G}^-(v)$, and we call these sets the \emph{out-neighborhood},  \emph{in-neighborhood}, and \emph{neighborhood} of $v$, respectively. The \emph{out-degree} (resp. \emph{in-degree}) of a vertex $v$ is the number of arcs outgoing from (resp. incoming at) $v$, and its \emph{degree} is the sum of its out-degree and its in-degree.
In all these notations, we may omit the subscripts if the digraph $G$ is clear from the context.


A \emph{subdigraph} of a digraph $G=(V,A)$ is a digraph $H=(V',A')$ such that $V' \subseteq V$ and $A' \subseteq A$. A \emph{path} from a vertex $u$ to a vertex $v$ in a digraph, also called \emph{$(u,v)$-path}, is a subdigraph obtained from an undirected path between $u$ and $v$ by orienting all edges toward $v$. The \emph{length} of a path is its number of arcs, and by an $\ell$-path we denote a path of length $\ell$. A $(v_1,v_k)$-path visiting vertices $v_1,v_2, \ldots, v_k$, in this order, is denoted by $(v_1, v_2, \ldots, v_k)$. A \emph{directed acyclic graph}, or DAG for short, is a digraph with no directed cycles. It is easy to prove that a digraph $G$ is a DAG if and only if there exists a total ordering of $V(G)$, called a \emph{topological ordering}, so that all arcs of $G$ go from smaller to greater vertices in this ordering.


For two positive integers $k$ and $\ell$, a \emph{$(k \times \ell)$-spindle} is the union of $k$ pairwise internally vertex-disjoint directed $(u,v)$-paths of length $\ell$  between two vertices $u$ and $v$, which are called the \emph{endpoints} of the spindle. More precisely, $u$ is called the \emph{tail} and $v$ the \emph{head} of a spindle. A
\emph{$2$-spindle} is any $(\ell_1,\ell_2)$-spindle with $\ell_1, \ell_2 \geq 1$.





For an undirected graph $G$, we denote by $\{u,v\}$ an edge between two vertices $u$ and $v$. A \emph{matching} in a graph is a set of pairwise disjoint edges. A vertex $v$ is \emph{saturated} by a matching $M$ if $v$ is an endpoint of one of the edges in $M$. In that case, we say that $v$ is \emph{$M$-saturated}. Given two matchings $M$ and $N$ in a graph, we let $M\triangle N$ denote their symmetric difference, that is, $M\triangle N = (M \setminus N)  \cup (N \setminus M)$.

\medskip
\noindent \textbf{Parameterized complexity}. We refer the reader to~\cite{DF13,FG06,Nie06,FPT-book} for basic background on parameterized complexity, and we recall here only some basic definitions.
A \emph{parameterized problem} is a decision problem whose instances are pairs $(x,k) \in \Sigma^* \times \mathbb{N}$, where $k$ is called the \emph{parameter}.
A parameterized problem is \emph{fixed-parameter tractable} ({\sf FPT}) if there exists an algorithm $\Acal$, a computable function $f$, and a constant $c$ such that given an instance $I=(x,k)$,
$\Acal$ (called an {\sf FPT} \emph{algorithm}) correctly decides whether $I \in L$ in time bounded by $f(k) \cdot |I|^c$. A parameterized problem is \emph{slice-wise polynomial} ({\sf XP}) if there exists an algorithm $\Acal$ and two computable functions $f,g$ such that given an instance $I=(x,k)$,
$\Acal$ (called an {\sf XP} \emph{algorithm}) correctly decides whether $I \in L$ in time bounded by $f(k) \cdot |I|^{g(k)}$.


%

Within parameterized problems, the class {\sf W}[1] may be seen as the parameterized equivalent to the class \np of classical optimization problems. Without entering into details (see~\cite{DF13,FG06,Nie06,FPT-book} for the formal definitions), a parameterized problem being {\sf W}[1]-\emph{hard} can be seen as a strong evidence that this problem is {\sl not} \fpt. The canonical example of {\sf W}[1]-hard problem is \textsc{Independent Set} parameterized by the size of the solution.
To transfer ${\sf W}[1]$-hardness from one problem to another, one uses a \emph{parameterized reduction}, which given an input $I=(x,k)$ of the source problem, computes in time $f(k) \cdot |I|^c$, for some computable function $f$ and a constant $c$, an equivalent instance $I'=(x',k')$ of the target problem, such that $k'$ is bounded by a function depending only on $k$. An equivalent definition of $\W$[1]-hard problem is any problem that admits a parameterized reduction from \textsc{Independent Set} parameterized by the size of the solution.


\medskip
\noindent \textbf{Matroids}. A pair $\mathcal{M} = (E, \mathcal{I})$, where $E$ is a ground set and $\mathcal{I}$ is a family of subsets of $E$,  is a \emph{matroid} if it satisfies the following three axioms:
\begin{enumerate}
\item $\emptyset \in \mathcal{I}$.
\item If $A' \subseteq A$ and $A \in \mathcal{I}$, then $A' \in \mathcal{I}$.
\item If $A,B \in \mathcal{I}$ and $|A| < |B|$, then there is $e \in B \setminus A$ such that $A \cup \{e\} \in \mathcal{I}$.
\end{enumerate}
The sets in $\mathcal{I}$ are called the \emph{independent sets} of the matroid. An inclusion-wise maximal set of $\mathcal{I}$ is called a \emph{basis} of the matroid. Using the third axiom, it is easy to show that all the bases of a matroid $\mathcal{M}$ have the same size, which is called the \emph{rank} of $\mathcal{M}$. A pair $\mathcal{M} = (E, \mathcal{I})$ over an $n$-element ground set $E$ is called a \emph{uniform matroid} if $\mathcal{I} = \{A \subseteq E : |A| \leq k\}$ for some for constant $k$.  For a broader overview on matroids, we refer to~\cite{Oxley}.

\medskip

For a positive integer $k$, we denote by  $[k]$ the set of all integers $i$ such that $1 \leq i \leq k$. Throughout the article, unless stated otherwise, we let $n$ denote the number of vertices of the input digraph of the problem under consideration.

\section{Complexity dichotomy in terms of the length of the paths}
\label{sec:dichomoty}

In this section we focus on the two natural optimization versions of finding subdivisions of spindles mentioned in the introduction, namely \textsc{Max $(k \times \bullet)$-Spindle Subdivision} and \textsc{Max $(\bullet \times \ell)$-Spindle Subdivision}.

It is easy to prove that the first problem is \NP-hard for any integer $k \geq 1$, by a simple reduction from \textsc{Longest Path}.

\begin{theorem}\label{thm:hard-max-ell}
Let $k \geq 1$ be a fixed integer. The \textsc{Max $(k \times \bullet)$-Spindle Subdivision} problem is \NP-hard.
\end{theorem}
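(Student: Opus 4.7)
The plan is to reduce from \textsc{Longest Path} on digraphs, which is well known to be \NP-hard. The case $k = 1$ is immediate because a subdivision of a $(1 \times \ell)$-spindle is, by definition, a directed path of length at least $\ell$. So from now on I fix an integer $k \geq 2$.

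Given an instance $(G, \ell)$ of \textsc{Longest Path}, I would build a digraph $G'$ as follows. Add two fresh vertices $u$ and $v$, and for every $w \in V(G)$ add the arcs $(u, w)$ and $(w, v)$, turning $G$ into a ``funnel'' between $u$ and $v$. Then attach $k - 1$ internally vertex-disjoint \emph{helper} paths $P_1, \ldots, P_{k-1}$ from $u$ to $v$, each of length exactly $\ell + 2$ and with fresh internal vertices. This construction is clearly polynomial, and the claim to prove is that $G'$ contains a subdivision of a $(k \times (\ell + 2))$-spindle if and only if $G$ contains a directed path of length at least $\ell$.

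The forward direction is straightforward: a path of length $\geq \ell$ in $G$ gives, after prepending $u$ and appending $v$, a $(u,v)$-path in $G'$ of length $\geq \ell + 2$, which together with the $k - 1$ helpers forms the required spindle. For the converse the structural observation I would exploit is twofold: (i) $u$ has no in-neighbour and $v$ no out-neighbour, and (ii) the internal vertices of each helper $P_i$ have their only in- and out-arcs inside $P_i$, so any path of $G'$ that visits an internal vertex of a helper must be the whole helper from $u$ to $v$. Thus every path of the alleged spindle is either a full helper $u \to v$, or uses no helper interior at all and lives in $V(G) \cup \{u, v\}$ with $u, v$ appearing only as endpoints. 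A short case analysis on the endpoint pair $(a, b)$ of the spindle then finishes the argument: if $\{a, b\} = \{u, v\}$, then by pigeonhole at least one of the $k$ paths is not a helper and must have the form $(u, w_1, \ldots, w_t, v)$ with the $w_i$ in $V(G)$, and its length being $\geq \ell + 2$ forces a $G$-path of length $\geq \ell$; in every other case no path of the spindle can be a helper, and a single path of the spindle already contributes a path of length $\geq \ell$ in $G$ after removing at most one initial ``$u$-arc'' and one final ``$v$-arc''.

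I expect the only delicate part to be carrying out this case analysis cleanly: one must argue that the helpers cannot be exploited as partial ``shortcuts'' when the spindle endpoints are not $\{u, v\}$, which is exactly what the two structural observations above are designed for. The correctness of the length shift ($\ell + 2$ rather than $\ell$) simply accounts for the two extra arcs $(u, w_1)$ and $(w_t, v)$ prepended and appended to a path of $G$.
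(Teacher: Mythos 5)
Your proof is correct. It follows the same overall strategy as the paper---reduce from \textsc{Longest Path} by padding $G$ with $k-1$ long helper paths so that a single long path in $G$ can be completed into a $(k\times\bullet)$-spindle subdivision---but the gadget is genuinely different, and the difference matters. The paper does not introduce a single source $u$ and sink $v$: for each $i\in[k-1]$ it adds a pair $s_i,t_i$, arcs from every $G$-vertex into $s_i$, arcs from $t_i$ out to every $G$-vertex, and a length-$n$ path from $s_i$ to $t_i$, and then simply asserts that the longest-path length of $G$ equals the optimum of the spindle problem on $G'$. Your construction is tighter. Making $u$ a source (in-degree $0$) and $v$ a sink (out-degree $0$) lets you pin down the possible tail/head of any spindle subdivision (an internal helper vertex has out-degree and in-degree $1<k$, so only $u$, $v$, or $G$-vertices qualify), and it forces that helper interiors can be used only when the spindle endpoints are exactly $u$ and $v$. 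In every other endpoint configuration all $k$ spindle paths live inside $G$ after removing at most one initial $u$-arc and one final $v$-arc, which is exactly the case analysis you spell out. With the paper's gadget, by contrast, the spindle is free to pick some $t_i$ as its tail and thereby prepend a ``free'' arc $t_i\to w$ to a longest $G$-path, so the claimed exact equality of the two optima actually needs an argument along the lines of yours rather than the one-sentence justification the paper gives; your version sidesteps this subtlety entirely. The one thing the paper's version buys is that the helper length is $n$, independent of the decision threshold $\ell$, whereas you set it to $\ell+2$; both work (any helper length at least the target does, and $n$ suffices since every $G$-path has length at most $n-1$), but a threshold-independent choice would keep $G'$ a function of $G$ and $k$ alone.
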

\begin{proof} We provide a polynomial reduction from the \textsc{Longest Path} problem on general digraphs, which is \NP-hard as it generalizes \textsc{Hamiltonian Path}~\cite{GareyJ79comp}. For $k=1$, \textsc{Max $(k \times \bullet)$-Spindle Subdivision} is exactly the \textsc{Longest Path} problem, and the result follows. For $k >1$, let $G$ be an instance of \textsc{Longest Path} with $n$ vertices, and we build an instance $G'$ of \textsc{Max $(k \times \bullet)$-Spindle Subdivision} as follows. We start with $G$ and we add to it $2k-2$ new vertices $s_1,\ldots,s_{k-1}, t_1,\ldots,t_{k-1}$. For $i \in [k-1]$, we add an arc from every vertex of $G$ to $s_i$, and arc from $t_i$ to every vertex of $G$, and a path from $s_i$ to $t_i$ with $n$ edges through $n-1$ new vertices. This completes the construction of $G'$. It is clear that the length of a longest path in $G$ equals the largest integer $k$ such that $G'$ contains a subdivision of a $(k \times \ell)$-spindle, concluding the proof.
\end{proof}

We now present the complexity  dichotomy for the second problem, in order to prove Theorem~\ref{thm:dichotomy}. We start with the hardness result.


\begin{theorem}\label{thm:hard-max-k}
Let $\ell \geq 4$ be a fixed integer. The \textsc{Max $(\bullet \times \ell)$-Spindle Subdivision} problem is \NP-hard, even when restricted to DAGs.
\end{theorem}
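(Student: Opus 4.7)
\begin{proof-sketch}
The plan is to give a polynomial-time reduction from an \NP-hard problem, adapting the construction of Brewster et al.~\cite{BrewsterHPRY03} to the spindle-subdivision setting.

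Given an instance $I$ of the source \NP-hard problem (I would use \textsc{3-Dimensional Matching}, with disjoint sets $X, Y, Z$ of size $n$ and a set $\mathcal{T} \subseteq X \times Y \times Z$ of triples), I would construct in polynomial time a DAG $G$ with two distinguished vertices $s, t$ and a target integer $k=n$ such that $G$ contains a subdivision of a $(k \times \ell)$-spindle if and only if $I$ is a YES-instance.

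\textbf{Forcing the spindle endpoints.} The construction is designed so that $(s, t)$ is the unique pair of vertices of $G$ that can support a $(k \times \ell)$-spindle subdivision of the target size. A standard way is to make $s$ the unique vertex of in-degree zero, $t$ the unique vertex of out-degree zero, and to control the local degrees of the remaining vertices so that no other vertex has enough out-going (resp.\ in-coming) arcs to serve as the tail (resp.\ head) of a sufficiently large spindle. With this in place, the maximum $k$ for which $G$ contains a subdivision of a $(k \times \ell)$-spindle equals the maximum number of pairwise internally vertex-disjoint $(s, t)$-paths of length at least $\ell$ in $G$.

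\textbf{Encoding $I$ in $G$.} The DAG $G$ is assembled from small gadgets — one per triple of the 3DM instance — each contributing an $(s, t)$-path of length exactly $\ell$ that visits the gadget-specific interior vertices together with three \emph{shared element} vertices $v_x, v_y, v_z$ corresponding to the triple's elements. Gadgets are glued through the shared vertices, which are the only vertices that can simultaneously lie on paths of two distinct triples. Crucially, gadget-specific \emph{guard vertices} are placed around each shared vertex to prevent cross-triple switching: a path traversing a shared vertex $v_a$ as part of a triple $T$ must both enter and exit $v_a$ via the portion of the gadget corresponding to $T$. With this constraint in place, pairwise internally vertex-disjoint $(s,t)$-paths in $G$ correspond bijectively to pairwise element-disjoint triples of $\mathcal{T}$, so the maximum packing equals the size of a maximum 3D-matching.

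\textbf{From $\ell = 4$ to general $\ell$.} The tightest case is $\ell = 4$, where each gadget is very constrained and requires a careful ad-hoc design along the lines of Brewster et al.\ to fit the coherence guards into so few intermediate vertices. For every $\ell > 4$, I would extend each gadget path with $\ell - 4$ additional triple-specific padding vertices, which preserve internal vertex-disjointness and maintain the correctness of the reduction.

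\textbf{Main obstacle.} The key difficulty is the design of the coherence guards: without them, the spindle problem on the naive layered construction collapses to a polynomial-time common-matching problem and fails to capture the \NP-hardness of 3DM. The essential trick, adapted from Brewster et al.~\cite{BrewsterHPRY03}, is to route each shared vertex $v_a$ through triple-specific entry and exit guards in a way that destroys the common-matching structure of the naive reduction and restores the full combinatorial complexity of 3DM.
\end{proof-sketch}
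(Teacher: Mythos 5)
Your high-level strategy --- a reduction from 3-Dimensional Matching producing a DAG with a forced tail $s$ and head $t$, with a padding step for $\ell > 4$ --- is indeed what the paper does. But your concrete plan diverges from the paper's in a way that, as stated, does not work, and the obstruction is exactly at the "tightest case" $\ell=4$ that you flag but leave to an unspecified "ad-hoc design."

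You set the target at $k=n$ and want internally vertex-disjoint $(s,t)$-paths of length $\ell$ to be in bijection with element-disjoint triples, with each path visiting the three shared element vertices of its triple plus triple-specific guard vertices. For $\ell=4$ this is already impossible on its face: an $(s,t)$-path of length $4$ has only $3$ internal vertices, which is exactly enough to hold $v_x,v_y,v_z$ and leaves no room at all for the "coherence guards" your argument relies on. So either you drop the guards (and the reduction collapses to a tractable common-matching problem, as you yourself warn), or you drop an element vertex from the path (and you no longer encode 3DM). Without the guards your claim that disjoint $(s,t)$-paths correspond bijectively to disjoint triples has no mechanism enforcing it, and the proof does not go through.

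The paper avoids this dead end by abandoning the "one path per selected triple" viewpoint entirely and replacing it with a global counting/covering argument. Each triple gadget has $6$ private vertices plus the $3$ shared element vertices, and $s,t$ are universal so that any $(s,t)$-path of length $4$ corresponds to a $2$-path (three vertices) in $G\setminus\{s,t\}$. The target is set at $k^*=n+2m$, which equals $|V(G)\setminus\{s,t\}|/3$; hence a $(k^*\times 4)$-spindle subdivision must tile \emph{all} of $V(G)\setminus\{s,t\}$ by vertex-disjoint $2$-paths. With the specific gadget $H$, the only ways to tile a gadget are with $3$ $2$-paths that consume $a_i,b_j,c_p$ ("select the triple") or with $2$ $2$-paths that consume no shared vertex ("don't select it"). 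The count $3|\mathcal{T}'|+2(m-|\mathcal{T}'|)=n+2m$ then forces exactly $n$ selected, pairwise-disjoint triples. In short, rigidity comes for free from the exact-cover count, not from guards, and both selected and non-selected gadgets contribute paths. If you want to repair your proposal you should replace "one path per matched triple, $k=n$, guards" with this exact-cover reformulation.
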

\begin{proof} We provide a polynomial reduction from \textsc{3-Dimensional Matching}, which is \NP-hard~\cite{GareyJ79comp}.  In the \textsc{3-Dimensional Matching} problem, we are given three sets $A, B, C$ of the same size and a set of triples $\mathcal{T} \subseteq A \times B \times C$. The objective is to decide whether there exists a set $\mathcal{T'} \subseteq \mathcal{T}$ of pairwise disjoint triples with $|\mathcal{T'}| = |A|$. Given an instance $(A,B,C,\mathcal{T})$ of \textsc{3-Dimensional Matching}, with $|A|=n$ and $\mathcal{T}=m$, we construct an instance $G$ of \textsc{Max $(\bullet \times \ell)$-Spindle Subdivision} as follows.
We first present the reduction for $\ell = 4$, and then we explain how to modify it for a general $\ell > 4$.

For every $i \in [n]$, we add to $G$ three vertices $a_i,b_i,c_i$, corresponding to the elements in the sets $A,B,C$, respectively. Let $H$ the digraph with vertices $x_0,x_1,y_0,y_1,z_0,z_1,a,b,c$ and arcs $(x_0,x_1), (x_1,a), (x_1,y_0), (y_0,y_1), (y_1,b), (x_0,z_0), (z_0,z_1), (z_1,c)$  (see Figure~\ref{fig:reduction-4}(a)). For every triple $T \in \mathcal{T}$, with $T = (a_i,b_j,c_p)$, we add to $G$ a copy of $H$ and we identify vertex $a$ with $a_i$, vertex $b$ with $b_j$, and vertex $c$ with $c_p$. Finally, we add a new vertex $s$ that we connect to all other vertices introduced so far, and another vertex $t$ to which we connect all other vertices introduced so far except $s$.

\begin{figure}[h!]
\begin{center}
\includegraphics[width=0.93\textwidth]{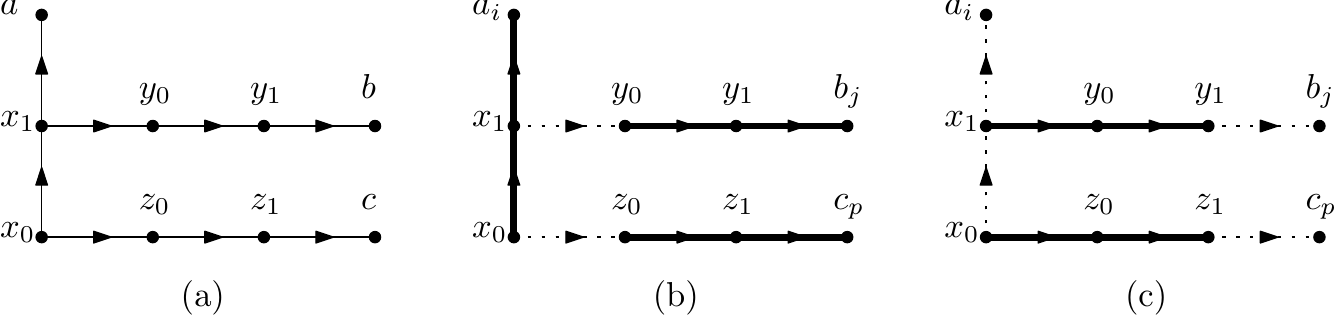}
\end{center}\vspace{-.15cm}
\caption{(a) Digraph $H$. (b) Selected paths when $T \in \mathcal{T}'$. (c) Selected paths when $T \in \mathcal{T} \setminus \mathcal{T'}$.}
\label{fig:reduction-4}
\end{figure}

 The constructed digraph $G$  is easily seen to be a DAG. Indeed, we can define a topological ordering of $V(G)$ so that all arcs go from left to right as follows. We select $s$ (resp. $t$) as the leftmost (resp. rightmost) vertex. We divide the remaining vertices of $G$ into two blocks. On the right, we place all the vertices $\{a_i,b_i,c_i : i \in [n]\}$, and we order them arbitrarily. On the left, we place the remaining vertices of $G$, which we also order arbitrarily, except that for every triple $T \in \mathcal{T}$, we order the vertices in its copy of $H$, distinct from $a,b,c$, such that $x_0 < x_1 <y_0 < y_1 < z_0 < z_1$ holds. One can check that, with respect to this ordering, all the arcs of $G$ go from left to right.

Note that $|V(G)| = 3n + 6m + 2$, and therefore the largest integer $k$ for which $G$ contains a subdivision of a $(k \times 4)$-spindle is $k^* := n + 2m$, as each path involved in such a spindle contains at least three vertices distinct from its endpoints. We claim that $(A,B,C,\mathcal{T})$ is a \textsc{Yes}-instance of \textsc{3-Dimensional Matching} if and only if $G$ contains a subdivision of a $(k^* \times 4)$-spindle.

Suppose first that $(A,B,C,\mathcal{T})$ is a \textsc{Yes}-instance, and let $\mathcal{T'} \subseteq \mathcal{T}$ be a solution. We proceed to define a set $\mathcal{P}$ of $n+2m$ vertex-disjoint 2-paths in $G \setminus \{s,t\}$, which together with $s$ and $t$ yield the desired spindle. For every $T \in \mathcal{T'}$, with $T = (a_i,b_j,c_p)$, we add to $\mathcal{P}$ the three paths $(x_0,x_1,a_i)$, $(y_0,y_1,b_j)$, and $(z_0,z_1,c_p)$ (see the thick arcs in Figure~\ref{fig:reduction-4}(b)). On the other hand, for every $T \in \mathcal{T} \setminus \mathcal{T'}$, with $T = (a_i,b_j,c_p)$, we add to $\mathcal{P}$ the two paths $(x_1,y_0,y_1)$ and $(x_0,z_0,z_1)$ (see the thick arcs in Figure~\ref{fig:reduction-4}(c)). Since $\mathcal{T}'$ is a solution of \textsc{3-Dimensional Matching}, it holds that $|\mathcal{T}'| = n$, and thus $\mathcal{P} = 3n + 2(m - n) = n + 2m$, as required.

Conversely, suppose that $G$ contains a subdivision of a $(k^* \times 4)$-spindle $S$. Since $s$ and $t$ are the only vertices in $G$ with in-degree and out-degree at least $k^*$, respectively, necessarily they are the endpoints of $S$. Since $|V(G) \setminus \{s,t\} |= 3k^*$, it follows that $S \setminus \{s,t\}$ consists of a collection $\mathcal{P}$ of $k^*$ vertex-disjoint 2-paths that covers all the vertices in $V(G) \setminus \{s,t\}$. Let $H$ be the subdigraph in $G$ associated with an arbitrary triple $T \in \mathcal{T}$, and consider $\mathcal{P} \cap H$. By construction of $H$, it follows that if $\mathcal{P} \cap H$ is not equal to one of the configurations corresponding to the thick arcs of Figure~\ref{fig:reduction-4}(b) or Figure~\ref{fig:reduction-4}(c), necessarily at least one vertex in $V(H)$ would not be covered by $\mathcal{P}$, a contradiction. Let $\mathcal{T}'$ be the set of triples in $\mathcal{T}$ such that the corresponding gadget $H$ intersects $\mathcal{P}$ as in Figure~\ref{fig:reduction-4}(b). It follows that $3|\mathcal{T}'| + 2(m-|\mathcal{T}'|) = |\mathcal{P}| = k^* = n + 2m$, and therefore $|\mathcal{T}'| = n$. Since all the 2-paths in $\mathcal{P}$ associated with the triples in $\mathcal{T}'$ are vertex-disjoint, we have that $\mathcal{T}'$ is a collection of $n$  pairwise disjoint triples, hence a solution of \textsc{3-Dimensional Matching}.

For a general $\ell > 4$, we define the digraph $G$ in the same way, except that we subdivide the arcs outgoing from $s$ exactly $\ell - 4$ times. The rest of the proof is essentially the same, and the result follows.\end{proof}

We now turn to the cases that can be solved in polynomial time. We first need some ingredients to deal with the case $\ell = 3$, which is the most interesting one. Let $G$ be a digraph and let $X$ and $Y$ be two subsets of $V(G)$.
We say that a path $P$ is \emph{directed from $X$ to $Y$} if $P$ is a directed path with first vertex $x$ and last vertex $y$ such that $x\in X$ and $y\in Y$. The path $P$ is \emph{nontrivial} if its endpoints are distinct.

The following proposition will be the key ingredient in the proof of Theorem~\ref{thm:poly-max-k}. Its proof is inspired by similar constructions given by Schrijver~\cite{Schrijver01} on undirected graphs and by Kriesell~\cite{Kri05} on directed graphs, usually called \emph{vertex splitting procedure} (see~\cite[Section 5.9]{BJ-Gutin-book}).

\begin{proposition}\label{prop:Tpaths}
Let $G$ be a digraph and let $X$ and $Y$ be two subsets of $V(G)$. The maximum number of vertex-disjoint directed nontrivial paths from $X$ to $Y$ can be computed in polynomial time.
\end{proposition}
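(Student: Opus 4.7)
The plan is a reduction to a single maximum $s$--$t$ flow computation in an auxiliary unit-capacity digraph $G'$ built from $G$ by the classical vertex-splitting procedure in the spirit of the Schrijver and Kriesell constructions referenced before the statement.

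Concretely, I would construct $G'$ as follows. For each vertex $v \in V(G)$, introduce two copies $v^-$ and $v^+$ joined by an internal arc $(v^-, v^+)$ of capacity $1$, so that vertex-disjointness in $G$ translates into the standard arc-capacity constraint. Every arc $(u,v) \in A(G)$ becomes an arc $(u^+, v^-)$ of capacity $1$. Finally, a super source $s$ is attached via capacity-$1$ arcs $(s, x^-)$ for every $x \in X$, and a super sink $t$ receives capacity-$1$ arcs $(y^+, t)$ for every $y \in Y$. A standard flow-decomposition argument, combined with the integrality of max flow in integer-capacity networks, puts integral $s$--$t$ flows of value $k$ in one-to-one correspondence with families of $k$ pairwise vertex-disjoint $(X,Y)$-paths in $G$; so in the absence of the nontriviality requirement, a single max-flow computation would already solve the problem.

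The delicate point is that for every $v \in X \cap Y$, the flow could push a unit along $s \to v^- \to v^+ \to t$, which corresponds to the forbidden trivial single-vertex $(v,v)$-path. I would handle this by locally modifying the splitting at every $v \in X \cap Y$. Split the incoming side of $v$ into a copy $v^-_s$ (receiving only the arc from $s$) and a copy $v^-_G$ (receiving only the arcs coming from $G$-arcs incoming to $v$), and analogously split the outgoing side into $v^+_t$ and $v^+_G$. Then add only the internal arcs corresponding to the three legitimate roles of $v$ in a nontrivial path (start, middle, end), structurally omitting the $v^-_s$-to-$v^+_t$ shortcut, and route them through a single shared unit-capacity bottleneck that forces $v$ to still participate in at most one flow path. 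With this modification, a maximum $s$--$t$ flow in $G'$ should equal the desired quantity.

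Polynomial running time then follows immediately from any polynomial-time max-flow algorithm. The main obstacle is the rigorous verification of the gadget at each $v \in X \cap Y$: one has to check simultaneously that (i) no $s$--$t$ flow can traverse the gadget without using at least one $G$-arc, (ii) the three legitimate roles of $v$ are all still realisable by suitable integral flows, and (iii) the shared bottleneck correctly enforces single usage of $v$ without over-restricting any valid configuration. These three requirements pull in opposite directions, so the careful design of the internal wiring is where the main technical work lies, and establishing the equivalence with the combinatorial optimum will rest on an integral flow-decomposition argument showing that every family of vertex-disjoint nontrivial $(X,Y)$-paths lifts to an integral flow of the same cardinality in $G'$, and conversely.
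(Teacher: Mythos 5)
Your flow formulation is on the right track when $X \cap Y = \emptyset$, and you have correctly isolated the only real difficulty: a vertex $v \in X\cap Y$ must be allowed to act as the start of a path, or the end of a path, or an interior vertex, or be unused, but never as a trivial one-vertex path from $s$ to $t$. The gap is that the gadget you sketch cannot exist as a pure max-flow gadget, so the ``main technical work'' you defer is in fact impossible along this route. Concretely, suppose the gadget for $v$ has sources $v^-_s, v^-_G$ and sinks $v^+_G, v^+_t$ and routes all flow through a unit-capacity bottleneck, i.e., the max flow from $\{v^-_s, v^-_G\}$ to $\{v^+_G, v^+_t\}$ is $1$. Then some $S$--$T$ cut with $\{v^-_s, v^-_G\}\subseteq S$ and $\{v^+_G,v^+_t\}\subseteq T$ has total capacity $1$, so (with integral capacities) a single arc $e=(p,q)$ carries the entire cut. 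The legitimate ``start'' flow of value $1$ from $v^-_s$ to $v^+_G$ must cross this cut, hence contains a directed path $v^-_s \to\cdots\to p\to q\to\cdots\to v^+_G$; likewise the ``end'' flow from $v^-_G$ to $v^+_t$ contains a directed path $v^-_G\to\cdots\to p\to q\to\cdots\to v^+_t$. Splicing the prefix of the first with the suffix of the second at $e$ yields a directed path $v^-_s\to\cdots\to p\to q\to\cdots\to v^+_t$ of positive capacity, so the trivial $s$--$v$--$t$ shortcut you wanted to ``structurally omit'' reappears as a valid flow. Conversely, if you drop the shared bottleneck to break this splice, the ``start'' and ``end'' configurations become arc-disjoint and a flow of value $2$ can pass through $v$, violating vertex-disjointness (for a concrete bad case take $G$ with vertices $u,w$, arcs $(u,w)$ and $(w,u)$, and $X=Y=\{u,w\}$: the true answer is $1$, but separate bottlenecks report $2$). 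So requirements (i), (ii), (iii) in your own list are genuinely irreconcilable for a local max-flow gadget.

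The paper sidesteps the whole issue by reducing not to flow but to \emph{maximum matching} in an auxiliary undirected graph $G'$. After a harmless cleanup (no arcs into $X\setminus Y$, none out of $Y\setminus X$), each $v\notin X\cup Y$ gets two copies $v,v'$ joined by an edge, while each $v\in X\cup Y$ stays a single vertex; the arcs of $G$ become edges $\{u,v\}$ or $\{u,v'\}$ depending on whether the head lies in $X\cup Y$. A matching of size $k+|V(G)\setminus(X\cup Y)|$ then corresponds exactly to $k$ vertex-disjoint nontrivial $(X,Y)$-paths. The reason this handles your problematic vertices for free is that a vertex $v\in X\cap Y$ is a \emph{single} node in $G'$, incident both to edges coming from $G$-arcs into $v$ and to edges going to $G$-arcs out of $v$: the matching constraint (degree at most one) says $v$ may be a start or an end but never both, and since there is no edge representing ``both at once,'' the trivial path is excluded with no gadget at all. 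Note that $G'$ is in general not bipartite, so one needs Edmonds' general matching algorithm rather than max-flow; this is the genuinely stronger tool that makes the reduction go through where the flow-only approach cannot.
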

\begin{proof} Let $\mathcal{P}$ be any collection of vertex-disjoint directed nontrivial paths from $X$ to $Y$ in $G$.
We can rebuild each path in $\mathcal{P}$ so that it has no internal vertices in $X\cup Y$.
Therefore, we can assume $G$ has no arcs to a vertex in $X\setminus Y$ or from a vertex in $Y\setminus X$.

Let $G'$ be the undirected graph built from $G$ as follows.
The vertex set of $G'$ is obtained from $V(G)$ by adding a copy $v'$ of each vertex $v$ not in $X\cup Y$.
We build the edge set of $G'$ starting from the empty set as follows.
For every vertex $v$ not in $X\cup Y$, add the edge $\{v,v'\}$.
For each arc $(u,v)$ in $G$,
we add the edge $\{u,v\}$ if $v\in X\cup Y$ and the edge $\{u,v'\}$ otherwise. See Figure~\ref{fig:example-matching}(a)-(b) for an example.

\begin{figure}[h!]
\begin{center}
\includegraphics[width=.94\textwidth]{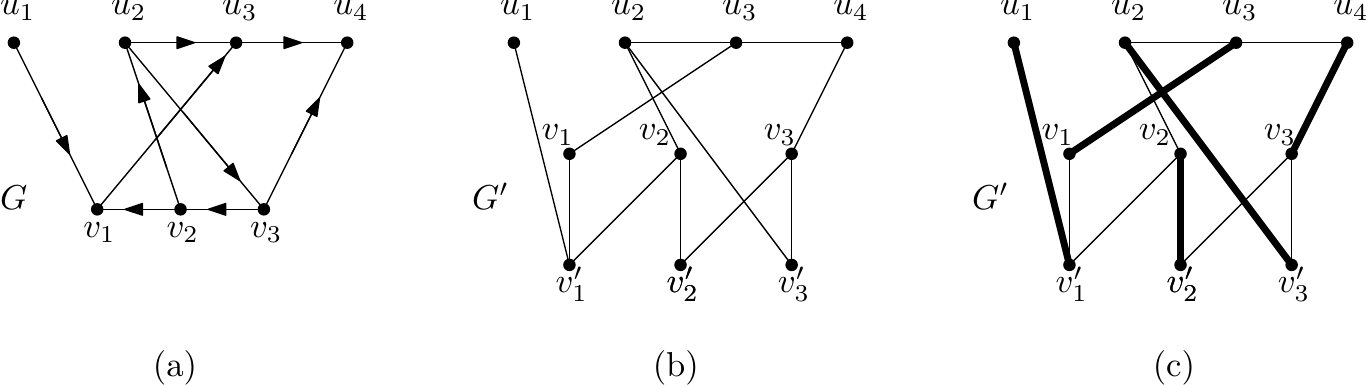}
\end{center}\vspace{-.15cm}
\caption{(a) Digraph $G$ with $X= \{u_1,u_2,u_3\}$ and $Y= \{u_3,u_4\}$. (b) Graph $G'$ associated with $G$. (c) The thick edges define a matching of size five in $G'$, corresponding to the two vertex-disjoint directed nontrivial paths $(u_1,v_1,u_3)$ and $(u_2,v_3,u_4)$ from $X$ to $Y$ in $G$.}
\label{fig:example-matching}
\end{figure}

\begin{claim}\label{claim-Tpaths}
The digraph $G$ contains a family of $k$ vertex-disjoint directed nontrivial paths from $X$ to $Y$ if and only if $G'$ has a matching of size $k+|V(G)\setminus (X\cup Y)|$.
\end{claim}
\begin{proof-claim}
Let $\mathcal{P}$ be a family of $k$ vertex-disjoint directed nontrivial paths from $X$ to $Y$ in $G$.
We may assume every path in $\mathcal{P}$ has no internal vertices in $X\cup Y$.
Let $U$ be the subset of vertices of $V(G)\setminus (X\cup Y)$ that are not in a path in $\mathcal{P}$.

We build a matching $M$ of $G'$ starting with $M = \{\{u,u'\} : u\in U\}$ as follows.
For every arc $(u,v)$ used in some path of $\mathcal{P}$, we add $\{u,v\}$ to $M$ if $v\in X\cup Y$, and $\{u,v'\}$ otherwise (see Figure~\ref{fig:example-matching}(c)).
Note that $M$ is indeed a matching, as vertices in $X\cup Y$ appear in at most one arc on a path in $\mathcal{P}$.
For a vertex $v$ not in $X\cup Y$, $v$ appears at most once as an internal vertex in a path $P$ of $\mathcal{P}$.
Therefore, it appears in exactly two arcs of $P$ and exactly once in an arc to $v$ and once in an arc from $v$.

We now claim that the number of $M$-saturated vertices in $G'$ is $2(k + |V(G)\setminus (X\cup Y)|)$.
This claim implies that $M$ has $k + |V(G)\setminus (X\cup Y)|$ edges.
To prove this claim, first note that all vertices in $V(G)\setminus (X\cup Y)$ are saturated.
Indeed, if $v$ is in $U$, then both $v$ and $v'$ are initially saturated.
Otherwise, $v$ is an internal vertex of a path in $\mathcal{P}$ and is contained in two edges that saturate both $v$ and $v'$.
To conclude, note that every path in $\mathcal{P}$ contains exactly two vertices in $X\cup Y$, namely its endpoints, and, therefore, saturates exactly two vertices of $X\cup Y$ in $G'$.

Now, let $M$ be a matching of $G'$ of size $k+|V(G)\setminus (X\cup Y)|$.
Let $N$ be the matching $\{\{v,v'\} : v\in V(G)\setminus (X\cup Y)\}$ and $H = G[M\triangle N]$.
Since $|M| = k + |N|$, $H$ contains at least $k$ components with more edges in $M$ than in $N$.
We claim that from these components we can obtain $k$ vertex-disjoint nontrivial paths in $G$.

To prove this claim, let $C$ be a component of $H$ with more edges in $M$ than in $N$.
Since $C$ has more edges in $M$, then it is a path alternating between edges of $M$ and $N$ that starts and ends with an edge of $M$, its endpoints are $N$-unsaturated and its internal vertices are $N$-saturated.
Thus, the endpoints of $C$ are its only vertices in $X\cup Y$.
Also note that if a vertex $w$ of $V(C)\cap V(G)$ is not in $X\cup Y$, then both $w$ and $w'$ are in $C$ and neither $w$ nor $w'$ appear in any other component of $H$.

Let $u$ and $v$ be the endpoints of $C$ and the set $W$ of internal vertices of $C$ that are also in $V(G)$ be $\{w_1, \ldots, w_\ell\}$.
Note that $u\neq v$ as $C$ contains at least one edge in $M$.
If $W = \emptyset$, then $uv$ is an edge of $G'$ and assume the edge $\{u,v\}$  is directed in $G$ from $u$ to $v$.
If $W\not = \emptyset$, then assume the transversal of $C$ from $u$ to $v$ visits the vertices in the order $u, w_1',w_1, w_2', w_2, \ldots, w_\ell', w_\ell, v$.
In both cases, note that $u, w_1, \ldots, w_\ell, v$ is the transversal of a directed path from $u$ to $v$ in $G$.
Since $G$ has no edge leaving a vertex of $Y\setminus X$ and no edge going into a vertex of $X\setminus Y$, then $u\in X$ and $v\in Y$.
\end{proof-claim}

\igjournal{saturated vertices are in a component, under the condition that we take a maximum matching $M$ minimizing the number of saturated vertices in $X \cup Y$}

Claim~\ref{claim-Tpaths} tells us that we can obtain a maximum number of vertex-disjoint nontrivial paths from $X$ to $Y$ in $G$ by finding a maximum matching in the graph $G'$, which can be done in polynomial time~\cite{Diestel12}. The proposition follows.
\end{proof}

We are now ready to prove the main algorithmic result of this section.

\begin{theorem}\label{thm:poly-max-k}
Let $\ell \leq 3$ be a fixed integer. The \textsc{Max $(\bullet \times \ell)$-Spindle Subdivision} problem can be solved in polynomial time.
\end{theorem}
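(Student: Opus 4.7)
My plan is to iterate over all $O(n^2)$ ordered pairs of distinct vertices $(u,v)$ of $G$ and compute, for each, the quantity $k_{uv}$ defined as the maximum number of internally vertex-disjoint $(u,v)$-paths of length at least $\ell$ in $G$; the answer to the problem is then $\max_{u \neq v} k_{uv}$. It therefore suffices to describe, for each $\ell \in \{1,2,3\}$, a polynomial-time procedure that computes $k_{uv}$.

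The cases $\ell = 1$ and $\ell = 2$ will be handled by standard max-flow computations. For $\ell = 1$, $k_{uv}$ is just the usual maximum number of internally vertex-disjoint $(u,v)$-paths in $G$, which is computed by a flow algorithm with unit vertex capacities on $V(G) \setminus \{u,v\}$. For $\ell = 2$, any $(u,v)$-path of length at least $2$ contains at least one internal vertex, so it cannot consist solely of the arc $(u,v)$; hence deleting the arc $(u,v)$ (when present) from $G$ and running the same max-flow procedure gives $k_{uv}$.

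The principal case is $\ell = 3$, which I plan to reduce to Proposition~\ref{prop:Tpaths}. Set $G' := G - \{u,v\}$, $X := N^+_G(u) \setminus \{v\}$, and $Y := N^-_G(v) \setminus \{u\}$. I will argue that $k_{uv}$ equals the maximum number of vertex-disjoint directed nontrivial paths from $X$ to $Y$ in $G'$, which Proposition~\ref{prop:Tpaths} computes in polynomial time. In one direction, a $(u,v)$-path $P$ with $|E(P)| \geq 3$ writes as $u, x, \ldots, y, v$ with $x \in X$, $y \in Y$, and $x \neq y$ (forced by having at least two distinct internal vertices), so $P - \{u,v\}$ is a nontrivial $(X,Y)$-path in $G'$, and internal vertex-disjointness in $G$ translates directly into vertex-disjointness in $G'$. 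Conversely, prepending $u$ and appending $v$ to each member of a family of vertex-disjoint nontrivial $(X,Y)$-paths in $G'$ recovers a family of internally vertex-disjoint $(u,v)$-paths of length at least $3$ in $G$ of the same cardinality.

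The main obstacle I anticipate is recognizing that Proposition~\ref{prop:Tpaths}, and in particular its \emph{nontrivial} qualifier, captures precisely the length-$\geq 3$ constraint after the standard transformation of stripping off $u$ and $v$; a single-vertex (trivial) $(X,Y)$-path would correspond to an $(u,v)$-path of length only $2$, which must be excluded. Once this reduction is in place, each of the $O(n^2)$ pair computations runs in polynomial time, and the claimed polynomial bound for \textsc{Max $(\bullet \times \ell)$-Spindle Subdivision} with $\ell \leq 3$ follows.
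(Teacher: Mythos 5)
Your proposal is correct and follows essentially the same route as the paper: iterate over ordered pairs $(u,v)$; handle $\ell\in\{1,2\}$ by a vertex-capacitated max-flow (deleting any $u$-$v$ arcs for $\ell=2$); and for $\ell=3$ strip off $u$ and $v$ and invoke Proposition~\ref{prop:Tpaths} with $X=N^+(u)$, $Y=N^-(v)$, where the \emph{nontrivial} qualifier is exactly what enforces length at least $3$. The only cosmetic difference is that you realize the exclusion of $u$ and $v$ via $X=N^+_G(u)\setminus\{v\}$ and $Y=N^-_G(v)\setminus\{u\}$ rather than by first deleting arcs between $u$ and $v$ as the paper does; the two bookkeeping choices are equivalent.
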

\begin{proof}
If $\ell = 1$, then the problem can be solved just by computing a maximum flow between every pair of vertices of the input digraph, which can be done in polynomial time~\cite{BJ-Gutin-book}. If $\ell = 2$, we use the same algorithm, except that for every pair of vertices we first delete all the arcs between them before computing a maximum flow, as the paths of length one are the only forbidden ones in a subdivision of a $(k \times 2)$-spindle.

Let us now focus on the case $\ell = 3$. We first guess a pair of vertices $s$ and $t$ of $V(G)$ as candidates for being the tail and head of the desired spindle, respectively, and we delete the arcs between $s$ and $t$, if any. The crucial observation is that the largest integer $k$ such that $G$ contains a $(k \times 3)$-spindle having $s$ and $t$ as tail and head, respectively,  equals the maximum number of vertex-disjoint directed nontrivial paths from $N^+(s)$ to $N^-(t)$ in the digraph $G \setminus \{s,t\}$. Now the result follows directly by applying the polynomial-time algorithm given by Proposition~\ref{prop:Tpaths} with input graph $G \setminus \{s,t\}$, $X=N^+(s)$, and $Y=N^-(t)$.
\end{proof}


We now prove a generalization of the result given in Theorem~\ref{thm:poly-max-k}, but using a much more powerful tool. Namely, instead of reducing the problem to finding a matching of appropriate size in an auxiliary graph, as in the proof of Proposition~\ref{prop:Tpaths}, we use as a black box an algorithm of Lov{\'{a}}sz~\cite{Lovasz80} to solve the \emph{matroid matching} problem in polynomial time for linearly-represented matroids (see~\cite{Oxley} for any missing definition about matroids).


\begin{theorem}\label{thm:poly-matroid-matching}
Given a digraph $G$ and three non-negative integers $k_1,k_2,k_2$, deciding whether $G$ contains a subdivision of a $(\ell^1_1, \ldots, \ell^1_{k_1}, \ell^2_1, \ldots, \ell^2_{k_2},\ell^3_1, \ldots, \ell^3_{k_3})$-spindle such that, for $j \in [3]$ and $i \in [k_j]$, $\ell^j_i = j$, can be solved in polynomial time.
\end{theorem}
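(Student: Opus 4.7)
First, I would guess the tail $s$ and head $t$ of the spindle among the $O(n^2)$ candidate pairs. For a fixed choice of $(s,t)$, a subdivision of the prescribed spindle corresponds to $k_1+k_2+k_3$ pairwise internally vertex-disjoint $(s,t)$-paths in $G$ with at least $k_2+k_3$ of them of length $\geq 2$ and at least $k_3$ of length $\geq 3$. Since direct $(s,t)$-arcs use no internal vertex, I can assume that $a^* := \min(a,k_1)$ of the $a$ arcs from $s$ to $t$ in $G$ are used to satisfy length-$1$ requirements. Setting $D := G-\{s,t\}$, $X := N_G^+(s)\setminus\{t\}$, and $Y := N_G^-(t)\setminus\{s\}$, length-$\geq 2$ $(s,t)$-paths in $G$ are in natural bijection with directed paths from $X$ to $Y$ in $D$: the \emph{trivial} ones (single vertices in $X \cap Y$) correspond exactly to length-$2$ paths in $G$, and the \emph{nontrivial} ones correspond to length-$\geq 3$ paths. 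Therefore, the problem reduces to deciding whether $D$ contains $N := k_1-a^*+k_2+k_3$ pairwise vertex-disjoint $X$-$Y$ paths, at least $k_3$ of them nontrivial.

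Next, I would extend the auxiliary undirected graph $D'$ of Proposition~\ref{prop:Tpaths} (for $(D,X,Y)$) into a graph $D^*$ by adding $T := N-k_3$ \emph{dummy} vertices $d_1,\dots,d_T$, each joined to every vertex of $X \cap Y$. The main claim is that the desired family of $N$ paths (with at least $k_3$ nontrivial) exists if and only if $D^*$ admits a matching of size at least $N + |V(D)\setminus(X\cup Y)|$. For the forward direction, I would concatenate the matching produced by Proposition~\ref{prop:Tpaths} for the nontrivial paths with one dummy edge per trivial path; the total size is exactly $N + |V(D)\setminus(X\cup Y)|$. For the backward direction, given such a matching $M$, I would split it as $M_1 \cup M_2$ with $M_1 \subseteq E(D')$ and $M_2$ the dummy edges, and apply the symmetric-difference argument of Proposition~\ref{prop:Tpaths} to $M_1$ to extract $|M_1|-|V(D)\setminus(X\cup Y)|$ vertex-disjoint nontrivial $X$-$Y$ paths, while $M_2$ supplies $|M_2|$ trivial paths. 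The dummy cap $|M_2| \leq T$ forces at most $T$ trivial paths and hence at least $N - T = k_3$ nontrivial ones, as required.

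The main obstacle is to ensure that the trivial paths obtained from $M_2$ and the nontrivial paths obtained from $M_1$ are mutually vertex-disjoint: \emph{a priori}, a vertex $v \in X \cap Y$ could be an endpoint of a nontrivial path and simultaneously be matched to a dummy. This is prevented by $M$ being a matching, since any vertex of $X \cap Y$ used as an endpoint of a nontrivial path must lie in some edge of $M_1$ (it has no primed copy in $D'$), and therefore cannot simultaneously lie in an edge of $M_2$. Together with Proposition~\ref{prop:Tpaths}'s guarantee of vertex-disjointness among the nontrivial paths, this yields the disjoint family required. Since a maximum matching in $D^*$ is computable in polynomial time and the enumeration over $(s,t)$ costs only an additional $O(n^2)$ factor, the overall procedure runs in polynomial time, with the matroid matching black box of Lov\'asz providing an alternative unified framework for packaging the argument.
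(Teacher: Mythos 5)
Your proposal is correct, and it takes a genuinely different route from the paper's. Both proofs start with the same reduction: iterate over candidate pairs $(s,t)$, peel off the length-one paths using direct $(s,t)$-arcs, and reduce to deciding whether $D = G - \{s,t\}$ contains $N$ vertex-disjoint $X$-$Y$ paths with at least $k_3$ nontrivial. From there, the paper iterates over the number $r$ of length-two paths and, for each $r$, invokes Lov\'asz's polynomial-time matroid matching algorithm~\cite{Lovasz80} on the auxiliary graph $D'$ of Proposition~\ref{prop:Tpaths}, taking the linear matroid to be the dual of the $r$-uniform matroid on $X \cap Y$ (extended trivially to the whole vertex set), in order to compute $m_r$, the maximum number of nontrivial vertex-disjoint $X$-$Y$ paths avoiding at least $r$ vertices of $X \cap Y$. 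You sidestep matroid matching entirely: adding $T = N - k_3$ dummy vertices adjacent to every vertex of $X \cap Y$ and solving one ordinary maximum matching problem on $D^*$ suffices. This is more elementary and removes the black-box dependence on~\cite{Lovasz80}, bringing the generalized theorem into the same matching framework already used for the $\ell = 3$ case of Theorem~\ref{thm:poly-max-k}. Your disjointness argument is also correct and, in a sense, cleaner than the paper's analogous step (which the paper itself flags as needing an extra claim about which vertices are saturated): a vertex of $X \cap Y$ matched to a dummy cannot be $M_1$-saturated, hence cannot be an endpoint of an extracted nontrivial path, and the symmetric-difference construction in Proposition~\ref{prop:Tpaths} already shows that internal vertices of the extracted components never lie in $X \cup Y$. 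The only thing worth making explicit is that in the backward direction you need the symmetric-difference \emph{construction}, not merely the bare statement of the claim in Proposition~\ref{prop:Tpaths}, so that the extracted nontrivial paths are guaranteed to touch $X \cup Y$ only at $M_1$-saturated endpoints; you acknowledge this implicitly. In short, the paper showcases matroid matching as a deliberately heavier tool, while your argument shows it is not actually needed here.
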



\begin{proof}
We iterate on pairs of vertices $s$ and $t$ in $G$ to decide if the desired spindle exists with tail $s$ and head $t$.
From now on, we consider a fixed pair of vertices $s$ and $t$.
If a $(\ell^1_1, \ldots, \ell^1_{k_1}, \ell^2_1, \ldots, \ell^2_{k_2},\ell^3_1, \ldots, \ell^3_{k_3})$-spindle subdivision exists with tail $s$ and head $t$, let $S$ be one such subdivision.

Let $p$ be the number of arcs with tail $s$ and head $t$.
Note that if $S$ exists, it can use at most $\min\{p, k_1\}$ arcs between $s$ and $t$.
In fact, we can assume $S$ uses exactly $\min\{p, k_1\}$ arcs between $s$ and $t$ as, otherwise, there is a 1-path which was subdivided and can be changed to an unused arc from $s$ to $t$.
All other 1-paths of the spindle must have been subdivided and have length at least two in $S$.
Therefore, $S$ exists if and only if there is a $(\ell^2_1, \ldots, \ell^2_{k'_2},\ell^3_1, \ldots, \ell^3_{k_3})$-spindle subdivision with tail $s$ and head $t$ with $k'_2 = k_2 + k_1 - \min\{p, k_1\}$.
From now on, assume $k_1 = 0$.

Let $X = N^+(s)$ and $Y = N^-(t)$.
For a non-negative integer $r$, let $m_r$ be the maximum number of nontrivial vertex-disjoint paths from $X$ to $Y$ such that at least $r$ vertices of $X\cap Y$ are not used by these paths.
We claim that $S$ exists if and only if $r+m_r \ge k_2+k_3$ for some integer $r$ with $0\le r \le \min\{k_2, |X\cap Y|\}$.
If $r+m_r \ge k_2+k_3$, then we can find $S$ by joining $s$ and $t$ to nontrivial paths from $X$ to $Y$ to find $k_2+k_3 - r$ paths of length at least three.
The remaining paths are built as 2-paths through $r$ vertices in $X\cap Y$ not used by the nontrivial paths.
On the other hand, if $S$ exists, let $r$ be the number of its 2-paths which are not subdivided in $G$, and note that $0\le r \le \min\{k_2, |X\cap Y|\}$.
These $r$ paths each contain $s$, $t$, and a vertex of $X\cap Y$.
The $k_2 - r$ 2-paths  which are subdivided in $G$ have length at least 3 in $G$.
Thus, by deleting $s$ and $t$ from $S$, we have $k_2 - r + k_3$ nontrivial vertex-disjoint paths from $X$ to $Y$ disjoint from the $r$ vertices used 2-paths of $S$ in $G$.
Therefore, we have $m_r \ge k_2 - r + k_3$, which is equivalent to $r+ m_r \ge k_2 + k_3$.

We iterate on values of $r$ with $0\le r\le \min\{k_2, |X\cap Y|\}$ to decide if $S$ exists.
We finish this proof by showing how to find the value of $m_r$ in polynomial time for a fixed integer $r$.
Namely, we use the matroid matching algorithm of Lov{\'{a}}sz~\cite{Lovasz80}.
For a graph $H$ and a linear matroid $M$ over $V(H)$, this algorithm finds  in polynomial time a maximum matching in $G$ whose saturated vertices form an independent set of $M$.

For the graph $H$ above, we use the graph $G'$ built in Proposition~\ref{prop:Tpaths}.
Recall that Claim~\ref{claim-Tpaths}  shows that a matching in $H$ of size $k + |V(G)\setminus (X\cup Y)|$ exists if and only if there are $k$ vertex-disjoint directed nontrivial paths from $X$ to $Y$. Furthermore, it can be also proved that the saturated vertices in $X\cup Y$ by the matching correspond precisely to the endpoints of the nontrivial paths from $X$ to $Y$\igjournal{I noticed this should be added to Claim~\ref{claim-Tpaths} (just a little more work). Actually, I think the description of $G'$ could be removed from the proof of Proposition~\ref{prop:Tpaths} so that it makes more sense for it to be referenced here. The claim can be used by the proposition and this theorem.}.
The linear matroid $M$ is such that a set of vertices of $H$ is independent in $M$ if it is disjoint from at least $r$ vertices of $X\cap Y$.
To see that $M$ is a linear matroid, note that $M$ is the dual of the $r$-uniform matroid over $X\cap Y$ by extending its ground set to $V(H)$ without changing the independent sets.

Now, a matching in $H$ of size $k + |V(G)\setminus (X\cup Y)|$ that is independent in $M$ corresponds  precisely to $k$ vertex-disjoint directed nontrivial paths from $X$ to $Y$ that are disjoint from a set of $r$ vertices of $X\cap Y$, and the matroid matching algorithm~\cite{Lovasz80} can find the value of $m_r$.
\end{proof}

\section{Finding subdivisions of 2-spindles}
\label{sec:two-paths}

In this section we focus on finding subdivisions of 2-spindles, and we prove Theorem~\ref{thm:2blockFPTsum} and Theorem~\ref{thm:2blockFPT-2length}. We prove the negative and the positive results of both theorems separately. Namely, we provide the hardness results in Section~\ref{sec:hardness-two-paths} and we focus on the
{\sf FPT} algorithms in Section~\ref{sec:FPT-algo}.

\subsection{Hardness results}
\label{sec:hardness-two-paths}

We start by proving the {\sf NP}-hardness results.

\begin{proposition}\label{prop:two-blocks-hard}
The \textsc{Max $(\bullet,\bullet)$-Spindle Subdivision} problem is {\sf NP}-hard. For every fixed integer $\ell_1 \geq 1$, the \textsc{Max $(\ell_1,\bullet)$-Spindle Subdivision} problem is {\sf NP}-hard.
\end{proposition}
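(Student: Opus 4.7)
The plan is to reduce from \textsc{Directed Hamiltonian Cycle}, which is \NP-hard. A single construction handles both statements, differing only in the query that is posed to the target problem.

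Given an instance $G$ of \textsc{Directed Hamiltonian Cycle} on $n$ vertices, I choose an arbitrary $v^* \in V(G)$ and build $G'$ by the classical vertex-splitting gadget: replace $v^*$ by two new vertices $v^+$, inheriting all out-arcs of $v^*$, and $v^-$, inheriting all in-arcs of $v^*$. Then I append a fresh directed path $v^+, p_1, p_2, \ldots, p_{\ell_1 - 1}, v^-$ of length $\ell_1$ through fresh internal vertices $p_1, \ldots, p_{\ell_1 - 1}$ (for $\ell_1 = 1$ this is just the arc $(v^+, v^-)$). The digraph $G'$ has exactly $n + \ell_1$ vertices. For the \textsc{Max $(\ell_1,\bullet)$-Spindle Subdivision} problem I use this construction with the given fixed $\ell_1$ and ask whether $G'$ contains a subdivision of a $(\ell_1, n)$-spindle; for the \textsc{Max $(\bullet,\bullet)$-Spindle Subdivision} problem I fix $\ell_1 = 2$ in the construction and ask whether $G'$ contains a subdivision of some $(\ell_1', \ell_2')$-spindle with $\ell_1' + \ell_2' = n + 2$. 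Both reductions are polynomial, assuming without loss of generality that $n \geq \ell_1$ (instances with smaller $n$ are solvable in constant time).

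The core claim is that $G$ has a Hamiltonian cycle if and only if $G'$ contains the queried spindle subdivision. The forward direction is immediate: a Hamiltonian cycle of $G$ passes through $v^*$ and becomes, after the split, a Hamiltonian $v^+$-to-$v^-$ path of length $n$, which together with the appended length-$\ell_1$ path forms the desired $(\ell_1, n)$-spindle. For the reverse direction, the key observation is that any $(\ell_1', \ell_2')$-spindle subdivision with $\ell_1' + \ell_2' = n + \ell_1$ contains exactly $n + \ell_1$ vertices and is therefore spanning in $G'$. Since $v^+$ has in-degree $0$ and $v^-$ has out-degree $0$ in $G'$, they cannot be internal to any path of the spindle and must be its two endpoints, with $v^+$ as tail and $v^-$ as head. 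Each $p_i$ has in- and out-degree $1$, and the unique in-neighbour of $p_1$ is $v^+$, so any path of the spindle that contains some $p_i$ is forced by the chain structure to coincide with $v^+, p_1, \ldots, p_{\ell_1 - 1}, v^-$, of length exactly $\ell_1$. Spanning then implies that one path of the spindle is precisely this chain, while the other, of length $n$, uses the $n-1$ vertices of $V(G) \setminus \{v^*\}$ as internal vertices; reversing the split turns it into a Hamiltonian cycle of $G$ through $v^*$.

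The only delicate step is the chain-forcing argument, which guarantees that, even in the $(\bullet,\bullet)$ variant where the two path lengths of the spindle are not prescribed, the $p_i$'s cannot be distributed between the two paths of the spindle. This follows directly from each $p_i$ having a unique in-neighbour and a unique out-neighbour in $G'$, combined with the internal vertex-disjointness of the two paths of the spindle. I do not anticipate any other obstacle.
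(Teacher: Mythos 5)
Your proof is correct and follows essentially the same strategy as the paper: the paper also reduces from \textsc{Directed Hamiltonian Cycle} via the vertex-splitting gadget (presented there as an intermediate reduction to \textsc{Directed Hamiltonian $(s,t)$-Path}), appends a fresh short directed path between the two split halves, and uses the vertex count to force the target spindle to be spanning, together with the degree structure to pin down its endpoints and the short path. The only cosmetic difference is that you compose the two reductions into one step and handle both variants with a single parameterized construction.
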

\begin{proof}
For both problems, we present a reduction from the \textsc{Directed Hamiltonian $(s,t)$-Path} problem, which consists in, given a digraph $G$ and two vertices $s,t \subseteq V(G)$, deciding whether $G$ has an $(s,t)$-path that is Hamiltonian. This problem is easily seen to be {\sf NP}-hard by a simple reduction from \textsc{Directed Hamiltonian Cycle}, which is known to be {\sf NP}-hard~\cite{GareyJ79comp}: given an instance $G$ of \textsc{Directed Hamiltonian Cycle}, construct from $G$ an instance $G'$ of \textsc{Directed Hamiltonian $(s,t)$-Path} by choosing an arbitrary vertex $v \in V(G)$ and splitting it into two vertices $s$ and $t$ such that $s$ (resp. $t$) is incident to exactly those arcs in $G$ that were outgoing from (resp. incoming at) $v$.

We first prove the hardness of \textsc{Max $(\bullet,\bullet)$-Spindle Subdivision}. Given an instance $G$ of \textsc{Directed Hamiltonian $(s,t)$-Path}, with $|V(G)|=n$, build an instance $G'$ of \textsc{Max $(\bullet,\bullet)$-Spindle Subdivision} as follows. Start from $G$, and delete all the arcs incoming at $s$ or outgoing from $t$, if any. Finally, add a new vertex $v$ and arcs $(s,v)$ and $(v,t)$.
We claim that $G$ has a Hamiltonian $(s,t)$-path if and only if $G'$ contains a subdivision of a $(\ell_1,\ell_2)$-spindle with $\min\{\ell_1,\ell_2\} \geq 1$ and $\ell_1 + \ell_2 = n+1$. Assume first that $G$ has a Hamiltonian $(s,t)$-path $P$. Then $G'$ contains a $(2,n-1)$-spindle defined by the $2$-path $(s,v,t)$ together with the Hamiltonian $(s,t)$-path $P$. Conversely, assume that $G'$ contains a subdivision $S$ of a $(\ell_1,\ell_2)$-spindle with $\min\{\ell_1,\ell_2\} \geq 1$ and $\ell_1 + \ell_2 = n+1$. Suppose that the newly added vertex $v \in V(G')$ does not belong to $S$, which implies that $|V(S)| \leq |V(G)| = n$. Since a $(\ell_1,\ell_2)$-spindle contains exactly $\ell_1 + \ell_2$ vertices, if follows that $ |V(S)| \geq \ell_1 + \ell_2 = n+1$, a contradiction to the previous sentence. Therefore, $v \in V(S)$ and so $(s,v,t)$ is one of the two paths of $S$. Thus, the remaining path of $S$ is an $(s,t)$-path of length $n-1$ in $G$, that is, a Hamiltonian $(s,t)$-path  in $G$.

We now prove the hardness of \textsc{Max $(\ell_1,\bullet)$-Spindle Subdivision} for every fixed integer $\ell_1 \geq 1$. Given an instance $G$ of \textsc{Directed Hamiltonian $(s,t)$-Path}, with $|V(G)|=n$, build an instance $G'$ of \textsc{Max $(\ell_1,\bullet)$-Spindle Subdivision} as follows. Start from $G$, and delete all the arcs incoming at $s$ (resp. outgoing from $t$), if any, and the arc $(s,t)$, if it exists. Finally, add an $(s,t)$-path with $\ell_1$ arcs consisting of new vertices and arcs.  One can easily check that $G$ has a Hamiltonian $(s,t)$-path if and only if $G'$ contains a subdivision of a $(\ell_1,n-1)$-spindle.
\end{proof}


Bj{\"{o}}rklund et al.~\cite{BjorklundHK04} showed that assuming
the Exponential Time Hypothesis\footnote{The {\sf ETH} states that there is no algorithm solving \textsc{3-SAT} on a formula with $n$ variables in time $2^{o(n)}$.} ({\sf ETH}) of Impagliazzo et al.~\cite{ImpagliazzoPZ01}, the  \textsc{Directed Hamiltonian Cycle} problem cannot be solved in time $2^{o(n)}$. This result together with the proof of Proposition~\ref{prop:two-blocks-hard} directly imply the following two results assuming the {\sf ETH}, claimed in
Theorem~\ref{thm:2blockFPTsum} and Theorem~\ref{thm:2blockFPT-2length}, respectively. The first one is that, given a digraph $G$ and a positive integer $\ell$, the problem of deciding whether there exist two strictly positive integers $\ell_1,\ell_2$ with $\ell_1+\ell_2 = \ell$ such that $G$ contains a subdivision of a $(\ell_1,\ell_2)$-spindle cannot be solved in time $2^{o(\ell)}\cdot n^{O(1)}$. The second one is that, given a digraph $G$ and two integers $\ell_1, \ell_2$ with $\ell_2 \geq \ell_1 \geq 1$, the problem of deciding whether $G$ contains a subdivision of a $(\ell_1,\ell_2)$-spindle cannot be solved in time $2^{o(\ell_2)}\cdot n^{O(\ell_1)}$.

Concerning the existence of polynomial kernels, it is easy to prove that none of the above problems  admits polynomial kernels unless ${\sf NP} \subseteq {\sf coNP} / {\sf poly}$. Indeed, taking the disjoint union of $t$ instances of any of these two problems defines a \emph{cross-composition}, as defined by Bodlaender et al.~\cite{BJK14}, from the problem to itself, directly implying the desired results as both problems are {\sf NP}-hard by Proposition~\ref{prop:two-blocks-hard}.
We refer to~\cite{BJK14} for the missing definitions.

\subsection{FPT algorithms}
\label{sec:FPT-algo}

Our {\sf FPT} algorithms for finding subdivisions of $(\ell_1,\ell_2)$-spindles are based on the technique of \emph{representative families} introduced by Monien~\cite{Monien85}. We use the improved version of this technique recently presented by
Fomin et al.~\cite{FominLPS16} and, more precisely, our algorithms and notation are inspired by the ones for \textsc{Long Directed Cycle} given in~\cite{FominLPS16}.
We start with some definitions introduced from~\cite{FominLPS16} that can also be found in~\cite{FPT-book}.


Two independent sets $A,B$ of a matroid $\mathcal{M}$ \emph{fit} if $A \cap B = \emptyset$ and $A \cup B$ is independent.

\begin{definition}\label{def:represent}
 Let $\mathcal{M}$ be a matroid and $\mathcal{A}$ be a family of sets of size $p$ in $\mathcal{M}$. A subfamily $\mathcal{A}' \subseteq \mathcal{A}$ is said to \emph{$q$-represent} $\mathcal{A}$ if for every set $B$ of size $q$ such that there is an $A \in \mathcal{A}$ that fits $B$, there is an $A' \in \mathcal{A}'$ that also fits $B$. If $\mathcal{A}'$ $q$-represents $\mathcal{A}$, we write $\mathcal{A}' \subseteq^q_{\text{rep}} \mathcal{A}$.
\end{definition}

\subsubsection{Finding 2-spindles with large total size}
\label{sec:FPT-1}

We start with the algorithm to solve the problem of, given a digraph $G$ and a positive integer $\ell$, deciding whether there exist two strictly positive integers $\ell_1,\ell_2$ with $\ell_1+\ell_2 = \ell$ such that $G$ contains a subdivision of a $(\ell_1,\ell_2)$-spindle, running in time $2^{O(\ell)}\cdot n^{O(1)}$.


If a subdigraph $S$ of $G$ is a subdivision of a $(\ell_1,\ell_2)$-spindle, with $\min\{\ell_1,\ell_2\} \geq 1$ and $\ell_1+\ell_2 = \ell$, we say that $S$ is a \emph{good spindle}. We may assume that $\max\{\ell_1,\ell_2\} \geq 2$, as otherwise the desired spindle is just an arc with multiplicity two, which can be detected in polynomial time by using a maximum flow algorithm.

The following simple observation, whose proof can be easily verified, will be crucially used by the algorithm that we propose in the sequel. See Figure~\ref{fig:lem:structural-1} for an illustration.

\begin{lemma}\label{lem:structural-1}
A digraph $G$ has a good spindle if and only if there exist vertices $u,u_1,u_2,v$, integers $\ell_1,\ell_2$ with $\min\{\ell_1,\ell_2\} \geq 1$ and $\ell_1+\ell_2 = \ell$,
a $(u,u_1)$-path $P^u_1$ on $\ell_1$ vertices, a $(u,u_2)$-path $P^u_2$ on $\ell_2$ vertices, a $(u_1,v)$-path $P^v_1$, and a $(u_2,v)$-path $P^v_2$ such that $V(P^u_1) \cap V(P^u_2) = \{u\}$, $V(P^v_1) \cap V(P^v_2) = \{v\}$, $V(P^u_1) \cap V(P^v_1) = \{u_1\}$, $V(P^u_2) \cap V(P^v_2) = \{u_2\}$, and, if $\min\{\ell_1,\ell_2\} \geq 2$, $V(P^u_1) \cap V(P^v_2) = V(P^u_2) \cap V(P^v_1) = \emptyset$.
 \end{lemma}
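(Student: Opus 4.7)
The plan is to prove both directions of the equivalence by the natural ``split-and-glue'' construction: transform each path of the 2-spindle into two subpaths joined at a chosen internal vertex, and conversely glue four compatible subpaths into two internally vertex-disjoint $(u,v)$-paths.

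For the forward direction, I would start from a good spindle, consisting of two internally vertex-disjoint $(u,v)$-paths $Q_1, Q_2$ with lengths $p_1, p_2 \geq 1$ and $p_1 + p_2 \geq \ell$. The first step is to pick integers $\ell_1, \ell_2 \geq 1$ with $\ell_1 + \ell_2 = \ell$ and $\ell_i \leq p_i$ (e.g., $\ell_1 := \min\{p_1, \ell - 1\}$, $\ell_2 := \ell - \ell_1$; this is always feasible because $p_1 + p_2 \geq \ell$ and $p_1, p_2 \geq 1$). Then, letting $u_i$ be the $\ell_i$-th vertex of $Q_i$ counted from $u$ inclusive, I would split $Q_i$ at $u_i$ into a prefix $P^u_i$ on exactly $\ell_i$ vertices and a suffix $P^v_i$ from $u_i$ to $v$ (which is non-trivial because $\ell_i \leq p_i$ forces $u_i \neq v$). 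The simplicity of each $Q_i$ gives $V(P^u_i) \cap V(P^v_i) = \{u_i\}$, and the internal vertex-disjointness of $Q_1, Q_2$ distributes onto the four subpaths, yielding the two ``parallel'' conditions $V(P^u_1) \cap V(P^u_2) = \{u\}$, $V(P^v_1) \cap V(P^v_2) = \{v\}$, and (when $\min\{\ell_1, \ell_2\} \geq 2$) the two ``cross'' conditions $V(P^u_1) \cap V(P^v_2) = V(P^u_2) \cap V(P^v_1) = \emptyset$.

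For the backward direction, given four paths with the stated properties, I define $Q_i := P^u_i \cdot P^v_i$ for $i \in \{1,2\}$. The hypothesis $V(P^u_i) \cap V(P^v_i) = \{u_i\}$ ensures that $Q_i$ is a simple directed $(u,v)$-path of length $(\ell_i - 1) + |E(P^v_i)| \geq \ell_i$. To see that $Q_1, Q_2$ are internally vertex-disjoint, I expand $V(Q_1) \cap V(Q_2)$ as the union of the four pairwise intersections $V(P^{\alpha}_1) \cap V(P^{\beta}_2)$ for $\alpha, \beta \in \{u,v\}$: the two parallel terms equal $\{u\}$ and $\{v\}$ by hypothesis, and in the regime $\min\{\ell_1, \ell_2\} \geq 2$ the two cross terms are empty by hypothesis. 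Hence $Q_1 \cup Q_2$ is a subdivision of a $(\ell_1, \ell_2)$-spindle with $\ell_1 + \ell_2 = \ell$, i.e., a good spindle.

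The main obstacle will be the careful handling of the degenerate case $\min\{\ell_1, \ell_2\} = 1$: if, say, $\ell_1 = 1$, then $u = u_1$, $P^u_1$ collapses to the singleton $\{u\}$, and $P^v_1$ coincides (as a vertex set) with $Q_1$. In this case the cross intersection $V(P^u_2) \cap V(P^v_1)$ automatically contains $u$, which is precisely why the lemma deliberately omits the emptiness requirement in this regime. I will then verify the conclusion by a short direct check using the remaining hypotheses together with the fact that $u \notin V(P^v_2)$ whenever $\ell_2 \geq 2$ (since $V(P^u_2) \cap V(P^v_2) = \{u_2\}$ and $u \neq u_2$), and symmetrically when $\ell_2 = 1$, so that $V(Q_1) \cap V(Q_2) = \{u, v\}$ still holds.
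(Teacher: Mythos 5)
The paper offers no proof of this lemma (it is dismissed as ``a simple observation''), so the only thing to check is whether your argument is actually complete. Your forward direction is fine, and your backward direction is fine in the main regime $\min\{\ell_1,\ell_2\}\geq 2$. But the degenerate case $\min\{\ell_1,\ell_2\}=1$, which you correctly flag as ``the main obstacle,'' is not actually closed by the ``short direct check'' you describe, and in fact cannot be closed from the hypotheses as literally stated.

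Concretely, take $\ell_1=1$, so $u_1=u$, $P^u_1=\{u\}$ and $Q_1=P^v_1$. Your check establishes only that $V(P^u_1)\cap V(P^v_2)=\{u\}\cap V(P^v_2)=\emptyset$ (using $\ell_2\geq 2$). It says nothing about the other cross intersection $V(P^u_2)\cap V(P^v_1)$; you note it ``automatically contains $u$,'' but the real danger is that it may contain a \emph{further} vertex $w\neq u$, and none of the four stated ``fit'' conditions forbids this. If such a $w$ exists, then $w\in V(Q_1)\cap V(Q_2)$, so $Q_1$ and $Q_2$ share an internal vertex and $Q_1\cup Q_2$ is not a spindle subdivision; your concluding claim ``$V(Q_1)\cap V(Q_2)=\{u,v\}$ still holds'' is simply false. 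For a concrete witness, take $V(G)=\{u,w,u_2,v\}$ with arcs $(u,w),(w,v),(w,u_2),(u_2,v)$ and $\ell=4$: with $\ell_1=1$, $\ell_2=3$, $P^u_1=\{u\}$, $P^u_2=(u,w,u_2)$, $P^v_1=(u,w,v)$, $P^v_2=(u_2,v)$, all four stated intersection conditions hold, yet $G$ contains no good spindle of total length $4$. So the backward implication needs an extra hypothesis in the degenerate case — namely $V(P^u_2)\cap V(P^v_1)=\{u_1\}$ (and symmetrically) — which is exactly what the algorithm later enforces through the requirement $S\cap(V(P_1^v)\cup V(P_2^v))=\{u_1,u_2\}$. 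Your proof should either add this hypothesis (and note that it is harmless for the forward direction, where it clearly holds) or derive it; as written, the check you outline does not suffice.

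Two minor points in the forward direction are worth tightening. First, the ``parallel'' condition $V(P^v_1)\cap V(P^v_2)=\{v\}$ does not follow purely from internal disjointness of $Q_1,Q_2$ when $\ell_1=1$, since then $u\in V(P^v_1)$; you need the observation $u\notin V(P^v_2)$ (because $\ell_2\geq 2$), which is the same fact you use later, so it is worth making the dependence explicit. Second, you should observe that $\ell_i\leq p_i$ forces $u_i\neq v$, so the suffixes $P^v_i$ are nontrivial; this also shows the prefixes $P^u_i$ avoid $v$, which is needed for $V(P^u_1)\cap V(P^u_2)=\{u\}$.
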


In the above lemma, note that if $\min\{\ell_1,\ell_2\} = 1$ then one of the paths $P^u_1$ and $P^u_2$, say $P^u_1$, may be degenerate to vertex $u$, and in that case we have that $u_1 = u$.

\begin{figure}[h!]
\begin{center}
\includegraphics[width=.45\textwidth]{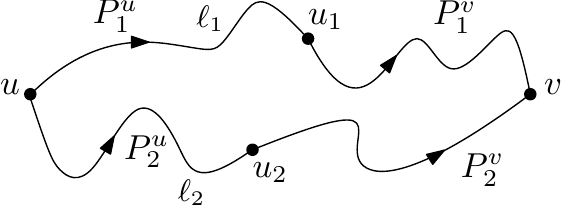}
\end{center}\vspace{-.15cm}
\caption{Illustration of the vertices and paths described in Lemma~\ref{lem:structural-1}.}
\label{fig:lem:structural-1}
\end{figure}

Motivated by Lemma~\ref{lem:structural-2}, for every triple of vertices $u,u_1,u_2 \in V(G)$ and positive integers $\ell_1, \ell_2$, we define
\begin{align*}
\mathcal{S}_{u,u_1,u_2}^{\ell_1,\ell_2} =  \Big\{ X :  \ \  & S\subseteq V(G), |X| = \ell_1 + \ell_2 -1, \text{ and $G[X]$ contains a}\\
& \text{$(u,u_1)$-path $P_1^u$ on $\ell_1$ vertices and a $(u,u_2)$-path $P_2^u$}\\
& \text{on $\ell_2$ vertices such that $V(P_1^u) \cap V(P_2^u) = \{u\}$}\Big\}.
\end{align*}


The key idea is to compute efficiently a small family of subsets of $V(G)$ that {\sl represents} the above sets, which are too large for our purposes. More precisely, for every triple of  vertices $u,u_1,u_2 \in V(G)$ and positive integers $\ell_1, \ell_2, q$ with $\ell_1,\ell_2 \leq \ell$ and $q \leq 2 \ell - (\ell_1 + \ell_2)$,
we will compute in time $2^{O(\ell)}\cdot n^{O(1)}$ a $q$-representative family
$$
\widehat{\mathcal{S}}_{u,u_1,u_2}^{\ell_1,\ell_2,q} \subseteq^q_{\text{rep}} \mathcal{S}_{u,u_1,u_2}^{\ell_1,\ell_2}.
$$

As in~\cite{FominLPS16}, the matroid with respect to which we will define the above $q$-representative family $\widehat{\mathcal{S}}_{u,u_1,u_2}^{\ell_1,\ell_2,q}$ is the uniform matroid with ground set $V(G)$ and rank $\ell+q$.


We postpone the computation of the above $q$-representative families in time $2^{O(\ell)}\cdot n^{O(1)}$ to Section~\ref{sec:compute-representatives}, and assume now that we already have these families at hand. The following lemma states that they are enough to find the desired good spindle.

\begin{lemma}\label{lem:key-lemma}
If $G$ contains a good spindle, then there exist vertices $u,u_1,u_2,v$, integers $\ell_1,\ell_2$ with $\min\{\ell_1,\ell_2\} \geq 1$ and $\ell_1 + \ell_2 = \ell $, a set $\widehat{S}_u \in \widehat{\mathcal{S}}_{u,u_1,u_2}^{\ell_1,\ell_2,q}$ with $q \leq \ell -1$, a $(u_1,v)$-path $P_1^v$, and a $(u_2,v)$-path $P_2^v$ such that $V(P_1^v) \cap V(P_2^v) = \{v\}$ and $\widehat{S}_u \cap (V(P_1^v) \cup V(P_2^v)) = \{u_1,u_2\}$.
\end{lemma}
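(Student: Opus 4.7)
The plan is to start with a good spindle $S$ in $G$, apply Lemma~\ref{lem:structural-1} to decompose it, and then invoke the representative-family machinery to locate the required $\widehat{S}_u$. Lemma~\ref{lem:structural-1} will provide vertices $u, u_1, u_2, v$, positive integers $\ell_1, \ell_2$ with $\ell_1 + \ell_2 = \ell$, and paths $P_1^u, P_2^u, P_1^v, P_2^v$ satisfying the prescribed intersection conditions. I will set $S_u := V(P_1^u) \cup V(P_2^u)$, which by construction belongs to $\mathcal{S}_{u, u_1, u_2}^{\ell_1, \ell_2}$ and has cardinality exactly $\ell - 1$, and $M := (V(P_1^v) \cup V(P_2^v)) \setminus \{u_1, u_2\}$. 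A short verification, distinguishing whether $\min\{\ell_1, \ell_2\} \geq 2$ or not and unfolding the disjointness conditions supplied by Lemma~\ref{lem:structural-1}, will show that $S_u \cap M = \emptyset$.

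The heart of the argument will then be to invoke the $q$-representative property with $q := |M|$. In the uniform matroid on $V(G)$ of rank $\ell + q$, the set $S_u$ fits $M$ since they are disjoint and $|S_u \cup M| = (\ell - 1) + q \leq \ell + q$. By the very definition of $\widehat{\mathcal{S}}_{u, u_1, u_2}^{\ell_1, \ell_2, q} \subseteq^q_{\mathrm{rep}} \mathcal{S}_{u, u_1, u_2}^{\ell_1, \ell_2}$, this yields some $\widehat{S}_u \in \widehat{\mathcal{S}}_{u, u_1, u_2}^{\ell_1, \ell_2, q}$ that also fits $M$, hence satisfies $\widehat{S}_u \cap M = \emptyset$. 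Since $\widehat{S}_u$ lies in $\mathcal{S}_{u, u_1, u_2}^{\ell_1, \ell_2}$, the induced subdigraph $G[\widehat{S}_u]$ contains a $(u, u_1)$-path and a $(u, u_2)$-path, so in particular $\{u_1, u_2\} \subseteq \widehat{S}_u$. Combining these two facts yields $\widehat{S}_u \cap (V(P_1^v) \cup V(P_2^v)) = \widehat{S}_u \cap (M \cup \{u_1, u_2\}) = \{u_1, u_2\}$, and $V(P_1^v) \cap V(P_2^v) = \{v\}$ is inherited directly from Lemma~\ref{lem:structural-1}; this is exactly the conclusion of the lemma.

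The main obstacle I anticipate is enforcing $q \leq \ell - 1$ as the statement requires, since for an arbitrary good spindle the quantity $|M|$ could in principle exceed $\ell - 1$. My plan to address this is to choose the good spindle and its decomposition so as to minimize $|V(P_1^v) \cup V(P_2^v)|$, in particular by replacing $P_1^v, P_2^v$ by a minimum 2-linkage from $\{u_1, u_2\}$ to $v$ in $G \setminus (S_u \setminus \{u_1, u_2\})$, and then to apply a minimality/exchange argument in the spirit of the analogous step for \textsc{Long Directed Cycle} in~\cite{FominLPS16} to conclude that $|M| \leq \ell - 1$, thereby justifying the choice $q \leq \ell - 1$.
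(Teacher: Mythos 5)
Your setup is on the right track and matches the paper up to the final step: take a good spindle $S$, decompose via Lemma~\ref{lem:structural-1}, set $S_u = V(P_1^u)\cup V(P_2^u)\in \mathcal{S}_{u,u_1,u_2}^{\ell_1,\ell_2}$, and try to use $q$-representativity against the complement of $S_u$ inside $V(S)$. You also correctly identify the obstacle: the statement requires $q\le \ell-1$, whereas $M=(V(P_1^v)\cup V(P_2^v))\setminus\{u_1,u_2\}$ can be arbitrarily large, because a subdivision of a 2-spindle is not bounded in size by $\ell$.

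The gap is in how you propose to overcome this obstacle. You claim that by taking a minimum good spindle and replacing $P_1^v,P_2^v$ by a minimum 2-linkage you can \emph{conclude that} $|M|\le\ell-1$. That bound is simply false in general: the digraph can force every $(u_1,v)$-path and $(u_2,v)$-path completing the spindle to be much longer than $\ell$, regardless of which minimum spindle you start from, so $|M|$ can exceed $\ell-1$ by an arbitrary amount. The actual argument in the paper does not try to shrink $M$. Instead, when $|V(S)\setminus S_u|>\ell-2$ it chooses $B$ to be a \emph{proper subset} of $M$, namely the union of two terminal segments $P_1^B,P_2^B$ of $P_1^v,P_2^v$ ending at $v$ with $|V(P_1^B)\cup V(P_2^B)|=\ell-1$, applies representativity with that smaller $B$, and then runs a minimality-of-$S$ exchange argument (two geometric cases depending on how the representative paths $\widehat{P}_1^u,\widehat{P}_2^u$ can cross $P_1^v,P_2^v$) to show that any $\widehat{S}_u$ avoiding this small $B$ must in fact avoid all of $M$ -- otherwise one could build a strictly smaller good spindle, contradicting minimality. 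That exchange argument is the genuine content of the lemma and is what your proposal leaves out; referring to "the analogous step for \textsc{Long Directed Cycle}" does not supply it, especially since in the 2-spindle setting one has to treat the case where a single representative path crosses both $P_1^v$ and $P_2^v$, which has no direct analogue in the single-cycle setting.
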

\begin{proof} Let $S$ be a good spindle in $G$ with minimum number of vertices, which exists by hypothesis, and let $u$ and $v$ be the tail and the head of $S$, respectively. Let $P_1^u = (u, \ldots, u_1)$ and $P_2^u = (u, \ldots, u_2)$ be two subdipaths in $S$ outgoing from $u$, on $\ell_1$ and $\ell_2$ vertices, respectively, with $\ell_1 + \ell_2 = \ell$. Let also $P_1^v = (u_1, \ldots, v)$ and $P_2^v = (u_2, \ldots, v)$ be the two subdipaths in $S$ from $u_1$ and $u_2$ to $v$, respectively (see Figure~\ref{fig:lem:structural-1}). Let $S_u = V(P_1^u) \cup V(P_2^u)$, and note that $S_u \in \mathcal{S}_{u,u_1,u_2}^{\ell_1,\ell_2}$.

In order to apply the properties of $q$-representative families, we define a vertex set $B \subseteq V(S)$ as follows. If $|V(S) \setminus S_u|  \leq \ell-2$, let $B = V(S) \setminus S_u$. Otherwise, let $B$ be the union of two subdipaths $P^B_1 = (v_1, \ldots, v)$ and $P^B_2 = (v_2, \ldots, v)$ in $S$ with $V(P^B_1) \cap V(P^B_1) = \{v\}$ and $|V(P^B_1) \cup V(P^B_1)| = \ell - 1$. Note that there may be several choices for the lengths of $P^B_1$ and $P^B_2$, as far as their joint number of vertices is equal to $\ell - 1$. Note also that $P^B_1$ (resp. $P^B_2$) is a subdipath of $P^v_1$ (resp. $P^v_2$).

Let $q = |B| \leq \ell -1$. Since $S_u \in \mathcal{S}_{u,u_1,u_2}^{\ell_1,\ell_2}$ and $S_u \cap B = \emptyset$, by definition of $q$-representative family there exists $\widehat{S}_u \in \widehat{\mathcal{S}}_{u,u_1,u_2}^{\ell_1,\ell_2,q}$ such that $\widehat{S}_u \cap B = \emptyset$. We claim that $\widehat{S}_u \cap (V(P_1^v) \cup V(P_2^v)) = \{u_1,u_2\}$, which concludes the proof of the lemma. If $|B|  \leq \ell-2$, the claim follows easily as $\widehat{S}_u \cap B = \emptyset$ and $B$ contains all the vertices in $V(S) \setminus S_u$. Suppose henceforth that $|B|  \geq \ell-1$, and let $\widehat{P}_1^u$ and $\widehat{P}_2^u$ be the two paths in $G[\widehat{S}_u]$ with $V(\widehat{P}_1^u) \cap V(\widehat{P}_1^u)= \{u\}$. Assume for contradiction that $(\widehat{S}_u \cap (V(P_1^v) \cup V(P_2^v))) \setminus  \{u_1,u_2\} \neq \emptyset$, and we distinguish two cases.

Suppose first that each of the paths $\widehat{P}_1^u$ and $\widehat{P}_2^u$ intersects exactly one of the paths $P_1^v$ and $P_2^v$. By hypothesis, there exists a vertex $w \in (\widehat{S}_u \cap (V(P_1^v) \cup V(P_2^v))) \setminus  \{u_1,u_2\}$, and suppose without loss of generality that $w \in V(\widehat{P}_1^u) \cap V(P_1^v)$; see Figure~\ref{fig:lem:two-cases}(a) for an illustration. We define a good spindle $\widehat{S}$ in $G$ as follows. The tail and head of $\widehat{S}$ are vertices $u$ and $v$, respectively. The first path of $\widehat{S}$ starts at $u$, follows $\widehat{P}_1^u$ until its first intersection with $P_1^v$ (vertex $w$ in Figure~\ref{fig:lem:two-cases}(a)), which is distinct from $u_1$ by hypothesis, and then follows $P_1^v$ until $v$. The second path of  $\widehat{S}$ starts at $u$, follows $\widehat{P}_2^u$ until its first intersection with $P_2^v$, which may be vertex $u_2$, and then follows $P_2^v$ until $v$. Since $|B|  \geq \ell-1$ and each of $\widehat{P}_1^u$ and $\widehat{P}_2^u$ intersects exactly one of  $P_1^v$ and $P_2^v$, it follows that $\widehat{S}$ is indeed a good spindle. On the other hand, since $|V(\widehat{P}_1^u) \cup V(\widehat{P}_2^u)| = |V((P_1^u) \cup V(P_2^u)|$ and vertex $w$ comes strictly after $u_1$ in $P_1^v$, it follows that the first path of $\widehat{S}$ is strictly shorter than the corresponding path of $S$, while the second one is not longer. Therefore, $|V(\widehat{S})| < |V(S)|$, a contradiction to the choice of $S$.

\begin{figure}[h!]
\begin{center}
\includegraphics[width=.95\textwidth]{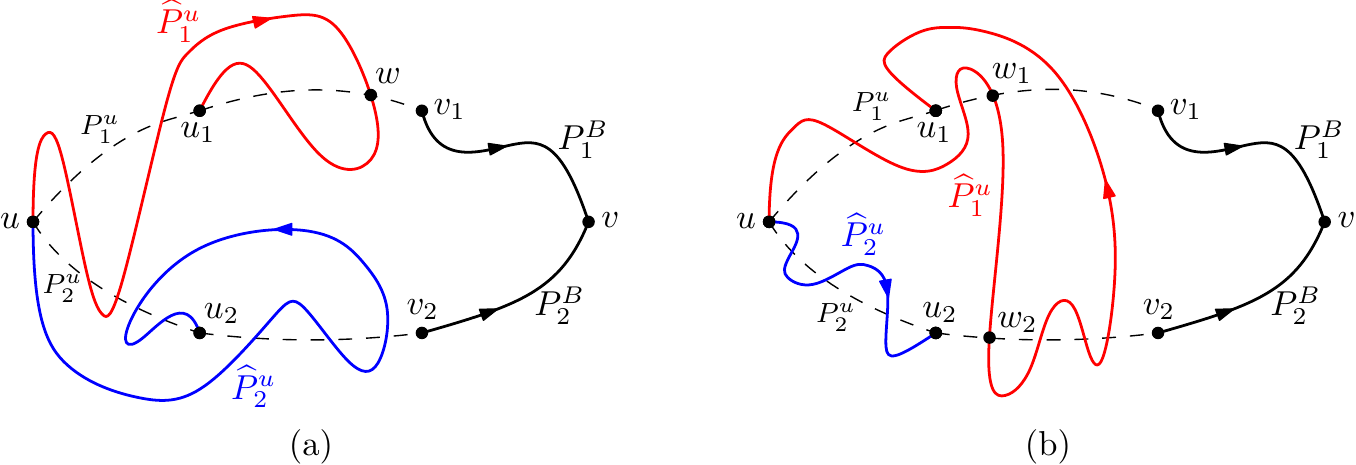}
\end{center}\vspace{-.15cm}
\caption{Illustration of the two cases in the proof of Lemma~\ref{lem:key-lemma}.}
\label{fig:lem:two-cases}
\end{figure}

Suppose now that one of the paths $\widehat{P}_1^u$ and $\widehat{P}_2^u$, say $\widehat{P}_1^u$, intersects both $P_1^v$ and $P_2^v$. Without loss of generality, suppose that, starting from $u$, $\widehat{P}_1^u$ meets $P_1^v$ before than $P_2^v$.  Let $w_1$ and $w_2$ be vertices of $\widehat{P}_1^u$ such that $w_1 \in V(P_1^v)$, $w_2 \in V(P_2^v)$, and there is no vertex of $\widehat{P}_1^u$ between $w_1$ and $w_2$ that belongs to
$V(P_1^v) \cup V(P_2^v)$; see Figure~\ref{fig:lem:two-cases}(b) for an illustration. We define a good spindle $\widehat{S}$ in $G$ as follows. The tail and head of $\widehat{S}$ are vertices $w_1$ and $v$, respectively. The first path of $\widehat{S}$ starts at $w_1$ and follows  $P_1^v$ until $v$. The second path of  $\widehat{S}$ starts at $w_1$, follows $\widehat{P}_1^u$ until $w_2$, and then follows $P_2^v$ until $v$. By the choice of $w_1$ and $w_2$ and since $|B|  \geq \ell-1$, it follows that $\widehat{S}$ is indeed a good spindle.  On the other hand, by construction $|V(\widehat{S})| \leq |V(S)| - |V(\widehat{P}_2^u)| < |V(S)|$, contradicting again the choice of $S$. \end{proof}

\noindent
\textbf{Wrapping up the algorithm}. We finally have all the ingredients to describe our algorithm, which proceeds as follows. First, for every triple of  vertices $u,u_1,u_2 \in V(G)$ and positive integers $\ell_1, \ell_2, q$ with $\ell_1,\ell_2 \leq \ell$ and $q \leq 2 \ell - (\ell_1 + \ell_2)$,
we compute, as explained in Section~\ref{sec:compute-representatives},   a $q$-representative family
$
\widehat{\mathcal{S}}_{u,u_1,u_2}^{\ell_1,\ell_2,q} \subseteq^q_{\text{rep}} \mathcal{S}_{u,u_1,u_2}^{\ell_1,\ell_2}
$
of size $2^{O(\ell)}$ in time $2^{O(\ell)}\cdot n^{O(1)}$. Then the algorithm checks, for each $u,u_1,u_2,v \in V(G)$, integers $\ell_1,\ell_2,q$ with $\min\{\ell_1,\ell_2\} \geq 1$, $\ell_1 + \ell_2 = \ell $, and  $q \leq \ell -1$, and set $S \in \widehat{\mathcal{S}}_{u,u_1,u_2}^{\ell_1,\ell_2,q}$, whether $G$ contains a
$(u_1,v)$-path $P_1^v$ and a $(u_2,v)$-path $P_2^v$ such that $V(P_1^v) \cap V(P_2^v) = \{v\}$ and $S \cap (V(P_1^v) \cup V(P_2^v)) = \{u_1,u_2\}$. Note that the latter check can be easily performed in polynomial time by a flow algorithm~\cite{BJ-Gutin-book}. The correctness of the algorithm follows directly from Lemma~\ref{lem:structural-1} and Lemma~\ref{lem:key-lemma}, and its running time is $2^{O(\ell)}\cdot n^{O(1)}$, as claimed. In order to keep the exposition as simple as possible, we did not focus on optimizing either the constants involved in the algorithm or the degree of the polynomial factor. Nevertheless, explicit small constants can be derived by carefully following the details in Fomin et al.~\cite{FominLPS16}.

\subsubsection{Finding 2-spindles with two specified lengths}
\label{sec:FPT-2}

We now turn to the problem of finding 2-spindles with two specified lengths. Namely, given a digraph $G$ and two  integers $\ell_1, \ell_2$ with $\ell_2 \geq \ell_1 \geq 1$, our objective is to decide whether $G$ contains a subdivision of a $(\ell_1,\ell_2)$-spindle  in time $2^{O(\ell_2)}\cdot n^{O(\ell_1)}$. Note that this problem differs from the one considered in Section~\ref{sec:FPT-1}, as now we specify {\sl both} lengths of the desired spindle, instead of just its total size.  Our approach is similar to the one presented in Section~\ref{sec:FPT-1}, although some more technical ingredients are needed, and we need to look at the problem from a slightly different point of view.

In this section, we say that a subdigraph $S$ of $G$  is a \emph{good spindle} if it is a subdivision of a $(\ell_1,\ell_2)$-spindle. We may again assume that $\max\{\ell_1,\ell_2\} \geq 2$. The following lemma plays a similar role as Lemma~\ref{lem:structural-1}, but now we will exploit the fact that our algorithm can afford to guess the first $\ell_1$ vertices in the ``short'' path. Its proof is also easy to verify. See Figure~\ref{fig:lem:structural-2} for an illustration.


\begin{lemma}\label{lem:structural-2}
A digraph $G$ has a good spindle if and only if there exist vertices $u,u',v$,
a $(u,v)$-path $P_1$ of length at  least $\ell_1$,
a $(u,u')$-path $P_2^u$ on $\ell_2$ vertices,
and a $(u',v)$-path $P^v_2$ such that
$V(P_1) \cap V(P_2^u) = \{u\}$,
 $V(P_1) \cap V(P_2^v) = \{v\}$, and
 $V(P_2^u) \cap V(P_2^v)= \{u'\}$.
 \end{lemma}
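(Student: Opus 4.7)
The plan is to prove both directions of the equivalence directly from the definition of a subdivision of an $(\ell_1,\ell_2)$-spindle, using the prefix of length $\ell_2-1$ of the ``long'' side to designate the vertex $u'$.

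For the forward direction, I will take a good spindle $S \subseteq G$, which by definition consists of two internally vertex-disjoint $(u,v)$-paths $Q_1$ and $Q_2$ of lengths at least $\ell_1$ and $\ell_2$ respectively. I will set $P_1 := Q_1$ (which automatically satisfies $|E(P_1)| \geq \ell_1$), and split $Q_2$ at its $\ell_2$-th vertex $u'$ counted from $u$: the initial segment $P_2^u$ from $u$ to $u'$ has exactly $\ell_2$ vertices, and the tail $P_2^v$ from $u'$ to $v$ is well-defined and nontrivial because $Q_2$ has at least $\ell_2+1$ vertices, so $u' \neq v$. The three required intersection equalities then follow immediately: $V(P_1) \cap V(P_2^u) = \{u\}$ and $V(P_1) \cap V(P_2^v) = \{v\}$ are inherited from the internal vertex-disjointness of $Q_1$ and $Q_2$, while $V(P_2^u) \cap V(P_2^v) = \{u'\}$ holds because $P_2^u$ and $P_2^v$ are consecutive subpaths of the simple path $Q_2$ meeting only at the splitting vertex.

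For the backward direction, given the vertices and paths of the statement, I will form $P_2$ by concatenating $P_2^u$ with $P_2^v$ at $u'$; the hypothesis $V(P_2^u)\cap V(P_2^v)=\{u'\}$ ensures that $P_2$ is a simple $(u,v)$-path. Its length equals $(\ell_2-1) + |E(P_2^v)| \geq \ell_2$, since $u' \neq v$ (which follows from $v \in V(P_1)$, $u \neq v$, and $V(P_1) \cap V(P_2^u) = \{u\}$, so $v \notin V(P_2^u)$). Combining the two hypotheses $V(P_1) \cap V(P_2^u) = \{u\}$ and $V(P_1) \cap V(P_2^v) = \{v\}$ yields $V(P_1) \cap V(P_2) = \{u,v\}$, so $P_1$ and $P_2$ are internally vertex-disjoint, and $P_1 \cup P_2$ is a subdivision of an $(\ell_1,\ell_2)$-spindle, i.e., a good spindle in $G$.

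I do not expect any substantive obstacle: the lemma is essentially a restatement of the definition of a good spindle that isolates the $\ell_2$-vertex prefix $P_2^u$ of the longer subpath so that subsequent algorithmic steps can ``guess'' $u'$ and this prefix explicitly, at a cost of $n^{O(\ell_1)}$. The only bookkeeping is to check that $u'$ never collapses to $u$ or $v$, which uses the standing assumption $\max\{\ell_1,\ell_2\} \geq 2$ together with $\ell_2 \geq \ell_1$, guaranteeing $\ell_2 \geq 2$.
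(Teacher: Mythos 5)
Your proof is correct. The paper does not actually write out a proof of this lemma — it merely remarks that the statement ``is also easy to verify'' — and your direct argument (set $P_1$ to be the short branch, split the long branch at the $\ell_2$-th vertex $u'$, and check the three intersection conditions using $\ell_2 \geq 2$ to separate $u'$ from $u$ and $|V(Q_2)| \geq \ell_2+1$ to separate $u'$ from $v$) is exactly the intended elementary verification, with the right attention to the degenerate endpoints.
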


 \begin{figure}[h!]
\begin{center}\vspace{-.1cm}
\includegraphics[width=.45\textwidth]{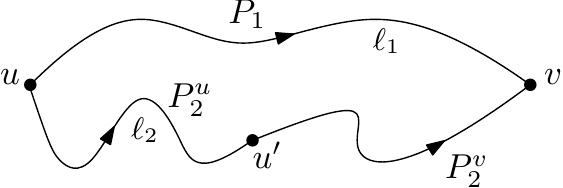}
\end{center}\vspace{-.15cm}
\caption{Illustration of the vertices and paths described in Lemma~\ref{lem:structural-2}.}
\label{fig:lem:structural-2}
\end{figure}

The main difference with respect to Section~\ref{sec:FPT-1} is that now we will only represent the candidates for the first $\ell_2$ vertices of the ``long'' path, denoted by $V(P_2^u)$ in Lemma~\ref{lem:structural-2}.  To this end, we define, similarly to~\cite{FominLPS16}, the following set for every pair of vertices $u,u' \in V(G)$ and positive integer $\ell_2$:
\begin{align*}
\mathcal{P}_{u,u'}^{\ell_2} =  \Big\{ X :  \    S\subseteq V(G), |X| = \ell_2, \text{ and $G[X]$ contains a $(u,u')$-path on $\ell_2$ vertices}\Big\}.
\end{align*}

The above sets are exactly the same as those defined by Fomin et al.~\cite{FominLPS16} to solve the \textsc{Long Directed Cycle} problem. Therefore, we can just apply~\cite[Lemma 5.2]{FominLPS16} and compute,  for every pair of vertices $u,u' \in V(G)$ and positive integers $\ell_2, q$ with $q \leq \ell_1 + \ell_2 \leq 2 \ell_2$,
a $q$-representative family
$$
\widehat{\mathcal{P}}_{u,u'}^{\ell_2,q} \subseteq^q_{\text{rep}}  \mathcal{P}_{u,u'}^{\ell_2}
$$
of size $2^{O(\ell_2)}$ in time $2^{O(\ell_2)}\cdot n^{O(1)}$.

Now we would like to state the equivalent of Lemma~\ref{lem:key-lemma} adapted to the new representative families. However, it turns out that the families $\widehat{\mathcal{P}}_{u,u'}^{\ell_2,q}$ are not yet enough in order to find the desired spindle. To circumvent this cul-de-sac, we use the following trick: we first try to find ``short'' spindles using the color-coding technique of Alon et al.~\cite{AlonYZ95}, and if we do not succeed, we can guarantee that all good spindles have at least one ``long'' path. In this situation, we can prove that the families $\widehat{\mathcal{P}}_{u,u'}^{\ell_2,q}$ are indeed enough to find a good spindle.
More precisely, a good spindle $S$ is said to be \emph{short} if both its paths have at most $2\ell_2$ vertices, and it is said to be \emph{long} otherwise. Note that the following lemma only applies to digraphs without  good short spindles.


\begin{lemma}\label{lem:key-lemma2}
Let $G$ be a digraph containing no good short spindles.  If $G$ contains a good long spindle, then there exist vertices $u,u',v$,
a $(u,v)$-path $P_1$ of length at  least $\ell_1$,
a $(u,u')$-path $\widehat{P}_2^u$ on $\ell_2$ vertices such that  $V(\widehat{P}_2^u) \in \widehat{\mathcal{P}}_{u,u'}^{\ell_2,q}$ with $q = \ell_1 + \ell_2 - 1$,
and a $(u',v)$-path $P^v_2$ such that
$V(P_1) \cap V(\widehat{P}_2^u) = \{u\}$,
 $V(P_1) \cap V(P_2^v) = \{v\}$, and
 $V(\widehat{P}_2^u) \cap V(P_2^v)= \{u'\}$.
\end{lemma}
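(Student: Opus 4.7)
I would adopt the same template as the proof of Lemma~\ref{lem:key-lemma}, adapted to the present setting where the individual path lengths $\ell_1$ and $\ell_2$ are both prescribed. Let $S$ be a good spindle in $G$ with the minimum number of vertices; such an $S$ exists because $G$ contains a good long spindle, and since $G$ has no good short spindles every good spindle in $G$ is in fact long, so any would-be smaller good spindle would also be long. Applying Lemma~\ref{lem:structural-2} to $S$ yields a decomposition $u, u', v, P_1, P_2^u, P_2^v$ with $|V(P_2^u)| = \ell_2$, and in particular $V(P_2^u) \in \mathcal{P}_{u,u'}^{\ell_2}$.

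I would then define a blocker set $B \subseteq V(G) \setminus V(P_2^u)$ of size exactly $q = \ell_1 + \ell_2 - 1$, concentrated near the head $v$ of the spindle. The base choice is $B = B_1 \cup B_2$, where $B_1$ consists of $v$ together with the $\ell_1 - 1$ predecessors of $v$ on $P_1$, and $B_2$ consists of the $\ell_2 - 1$ predecessors of $v$ on $P_2^v$. When $|V(P_2^v)| < \ell_2$, $B_2$ cannot be defined this way; however, in that case $|V(P_2^u \cup P_2^v)| < 2\ell_2$, so the longness of $S$ forces $|V(P_1)| > 2\ell_2$, and the leftover budget of $\ell_2 - |V(P_2^v)|$ vertices can be reallocated into $B_1$ by taking further predecessors of $v$ along $P_1$. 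In all cases, $V(P_2^u) \cap B = \emptyset$ and $|V(P_2^u)| + |B| = \ell_2 + q$ lies within the rank of the ambient uniform matroid used to define $\widehat{\mathcal{P}}_{u,u'}^{\ell_2,q}$; hence $V(P_2^u)$ fits $B$, and the representative family property furnishes some $V(\widehat{P}_2^u) \in \widehat{\mathcal{P}}_{u,u'}^{\ell_2,q}$ with $V(\widehat{P}_2^u) \cap B = \emptyset$. Let $\widehat{P}_2^u$ be the corresponding $(u, u')$-path on $\ell_2$ vertices.

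The final step is to argue that $V(\widehat{P}_2^u) \cap (V(P_1) \cup V(P_2^v)) = \{u, u'\}$, from which the triple $(P_1, \widehat{P}_2^u, P_2^v)$ immediately satisfies the conclusion of the lemma. Suppose for contradiction that some $w \in V(\widehat{P}_2^u) \cap (V(P_1) \cup V(P_2^v)) \setminus \{u, u'\}$ exists. Mirroring the two cases of the proof of Lemma~\ref{lem:key-lemma}, I would exploit $w$ to splice $\widehat{P}_2^u$ with appropriate segments of $P_1$ or $P_2^v$ and build an alternative good spindle $\widehat{S}$, possibly with shifted endpoints (for instance, with tail at $w$ rather than $u$, in analogy with the ``$w_1, w_2$'' construction of Lemma~\ref{lem:key-lemma}). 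Because $w \notin B$, the vertex $w$ is far from $v$ along $P_1$ or $P_2^v$, which preserves enough residual length on the ``far-from-$w$'' side to ensure that both paths of $\widehat{S}$ meet the required length bounds $\ell_1$ and $\ell_2$. A direct vertex count then gives $|V(\widehat{S})| < |V(S)|$, contradicting the minimality of $S$.

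The main obstacle is this last splicing argument. Whereas in Lemma~\ref{lem:key-lemma} only the total length of the spindle needed to be preserved, here both individual lengths $\ell_1$ and $\ell_2$ must be respected in $\widehat{S}$; concentrating $B$ near $v$ is precisely the design choice that makes this possible, since it forces any unwanted crossing vertex $w$ to lie deep inside $P_1$ or $P_2^v$, leaving large residual segments to absorb the stronger length constraints.
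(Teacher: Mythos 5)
Your proposal follows the same template as the paper's proof: take a minimum good spindle $S$, build a blocker set $B$ of size $q=\ell_1+\ell_2-1$ concentrated near the head $v$, invoke the $q$-representative family to swap $P_2^u$ for some $\widehat{P}_2^u$ avoiding $B$, and then derive a smaller good spindle from any unwanted crossing, contradicting minimality. The core case ($|V(P_2^v)|>\ell_2$) is handled correctly and matches the paper.

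The gap is in your ``reallocation'' case when $|V(P_2^v)|<\ell_2$. As written, $B_2$ consists of \emph{all} predecessors of $v$ on $P_2^v$, which includes $u'$. But $u'\in V(P_2^u)$, so $V(P_2^u)\cap B\supseteq\{u'\}\neq\emptyset$ and $V(P_2^u)$ does \emph{not} fit $B$, so the representative-family step cannot be applied. Excluding $u'$ from $B_2$ repairs disjointness but introduces an off-by-one in $|B|$, and more importantly the splicing argument that you leave ``by analogy'' would then have to be re-verified against this differently shaped $B$, which you do not do. The paper sidesteps the whole issue by a small but crucial choice you did not make: it defines $P_1$ to be the \emph{shorter} of the two paths of $S$ when invoking Lemma~\ref{lem:structural-2}. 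Since $S$ has no good short spindle and is long, the other path then has strictly more than $2\ell_2$ vertices, which forces $|V(P_2^v)|>\ell_2$ unconditionally; the blocker $B$ (last $\ell_1$ vertices of $P_1$ plus last $\ell_2$ vertices of $P_2^v$) is then automatically disjoint from $V(P_2^u)$, no case split needed. You should observe that Lemma~\ref{lem:structural-2} leaves the roles of the two paths free to assign, and exploit that freedom rather than inventing a reallocation case.
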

\begin{proof}
Let $S$ be a good spindle in $G$ with minimum number of vertices, which exists by hypothesis, and let $u$ and $v$ be the tail and the head of $S$, respectively.  Let $P_1$ be the shortest of the two $(u,v)$-paths of $S$, and let $u'$ be the vertex on the other path of $S$ at distance exactly $\ell_2 - 1 $ from $u$. Let $P_2^u$ and $P_2^v$ be the $(u,u')$-path and the $(u',v)$-path in $S$, respectively. Note that $P_2^u \in \mathcal{P}_{u,u'}^{\ell_2}$. Since by hypothesis $S$ is a{\sl long} spindle, it follows that $|V(P_2^v)| >  \ell_2$.

Again, in order to apply the properties of $q$-representative families, we define a vertex set $B \subseteq V(S)$ as follows, crucially using the hypothesis that $S$ is a good long spindle. Namely, $B$ contains the last $\ell_1$ vertices of the path $P_1$ together with the last $\ell_2$ vertices of the path $P_2^v$, including $v$. Note that $|B| = \ell_1 + \ell_2 -1$ and that, since $|V(P_2^v)| >  \ell_2$, we have $V(P_2^u) \cap B = \emptyset$.

Let $q = |B|$. Since $P_2^u \in \mathcal{P}_{u,u'}^{\ell_2}$ and $V(P_2^u) \cap B = \emptyset$, by definition of $q$-representative family there exists a set in
$\widehat{\mathcal{P}}_{u,u'}^{\ell_2,q}$ corresponding to a $(u,u')$-path $\widehat{P}_2^u$ such that $V(\widehat{P}_2^u) \cap B = \emptyset$.  We claim that $V(\widehat{P}_2^u) \cap V(S) \subseteq V(P_2^u)$, which concludes the proof of the lemma. Assume for contradiction that $(V(\widehat{P}_2^u) \cap V(S)) \setminus V(P_2^u) \neq \emptyset$, and we again distinguish two cases.

Suppose first that $\widehat{P}_2^u$ is disjoint from $P_1$, except for vertex $u$. Let $w$ be the first vertex of $\widehat{P}_2^u$ in $V(P_2^v) \setminus \{u'\}$; see Figure~\ref{fig:lem:two-cases-2}(a) for an illustration. We define a good spindle $\widehat{S}$ in $G$ as follows. The tail and head of $\widehat{S}$ are vertices $u$ and $v$, respectively. The first path of $\widehat{S}$ is equal to $P_1$. The second path of $\widehat{S}$ starts at $u$, follows $\widehat{P}_2^u$ until its first intersection with $P_1^v$ (vertex $w$ in Figure~\ref{fig:lem:two-cases-2}(a)), which is distinct from $u_1$ by hypothesis, and then follows $P_1^v$ until $v$. By definition of $B$, it follows that $\widehat{S}$ is a good spindle, and by construction $|V(\widehat{S})| < |V(S)|$, a contradiction to the choice of $S$.

\begin{figure}[h!]
\begin{center}
\includegraphics[width=.95\textwidth]{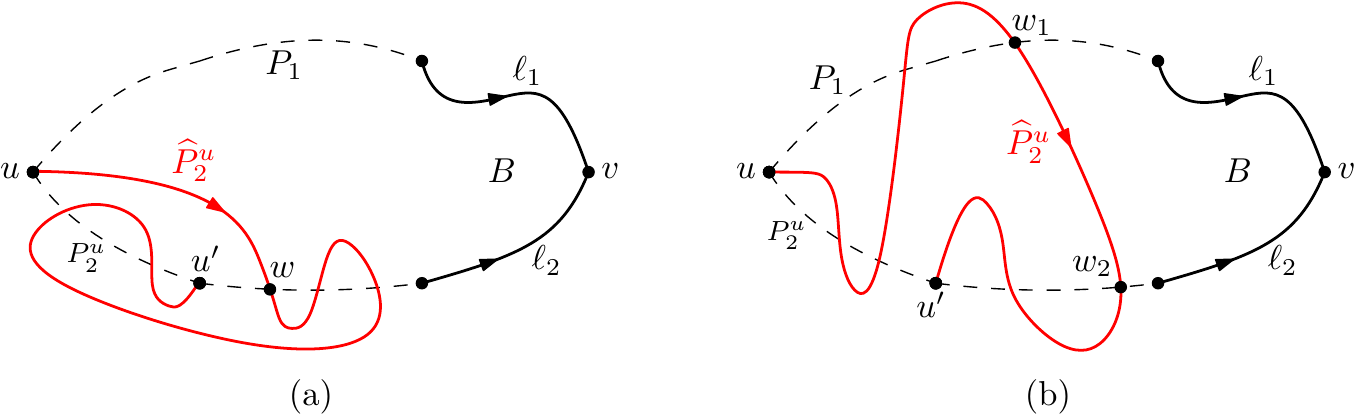}
\end{center}\vspace{-.15cm}
\caption{Illustration of the two cases in the proof of Lemma~\ref{lem:key-lemma2}.}
\label{fig:lem:two-cases-2}
\end{figure}

Suppose now that $\widehat{P}_2^u$ intersects $P_1$. Since $\widehat{P}_2^u$ ends at vertex $u' \notin V(P_1)$, there exist vertices $w_1, w_2$ such that $w_1 \in V(P_1)$, $w_2 \in V(\widehat{P}_2^u) \cup V(P_2^v) $, and there is no vertex of $\widehat{P}_1^u$ between $w_1$ and $w_2$ that belongs to
$V(P_1) \cup V(\widehat{P}_2^u) \cup V(P_2^v)$; see Figure~\ref{fig:lem:two-cases-2}(b) for an illustration. We define a good spindle $\widehat{S}$ in $G$ as follows. The tail and head of $\widehat{S}$ are vertices $w_1$ and $v$, respectively.  The first path of $\widehat{S}$ starts at $w_1$ and follows  $P_1$ until $v$. The second path of  $\widehat{S}$ starts at $w_1$, follows $\widehat{P}_2^u$ until $w_2$, and then follows $\widehat{P}_2^u \cup P_2^v$ until $v$. By the choices of $B$, $w_1$, and $w_2$, it follows that $\widehat{S}$ is a good spindle with $|V(\widehat{S})|  < |V(S)|$, contradicting again the choice of $S$. \end{proof}

\noindent
\textbf{Wrapping up the algorithm.} We start by trying to find good small spindles. Namely, for every pair of integers $\ell_1', \ell_2'$ with $\ell_1 \leq \ell_1' \leq 2\ell_2$ and $\ell_2 \leq \ell_2' \leq 2\ell_2$, we test whether $G$ contains a $(\ell_1', \ell_2')$-spindle as a {\sl subgraph}, by using the color-coding technique of Alon et al.~\cite{AlonYZ95}. Since the treewidth of an undirected spindle is two, this procedure takes time $2^{O(\ell_2)}\cdot n^{O(1)}$.

If we succeed, the algorithm stops. Otherwise, we can guarantee that $G$ does not contain any good short spindle, and  therefore we are in position to apply Lemma~\ref{lem:key-lemma2}.  Before this, we first compute, for every pair of vertices $u,u' \in V(G)$ and positive integers $\ell_2, q$ with $q \leq \ell_1 + \ell_2 \leq 2 \ell_2$, a $q$-representative family
$
\widehat{\mathcal{P}}_{u,u'}^{\ell_2,q} \subseteq^q_{\text{rep}}  \mathcal{P}_{u,u'}^{\ell_2}
$
of size $2^{O(\ell_2)}$ in time $2^{O(\ell_2)}\cdot n^{O(1)}$, using~\cite[Lemma 5.2]{FominLPS16}.

Now, for each path $\widehat{P}_2^u$ such that $V(\widehat{P}_2^u) \in \widehat{\mathcal{P}}_{u,u'}^{\ell_2,q}$, with $q = \ell_1 + \ell_2 - 1$, we proceed as follows. By  Lemma~\ref{lem:structural-2} and Lemma~\ref{lem:key-lemma2}, it is enough to guess a vertex $v \in V(G)$ and check whether
$G$ contains a $(u,v)$-path $P_1$ of length at  least $\ell_1$,
and a $(u',v)$-path $P^v_2$ such that
$V(P_1) \cap V(\widehat{P}_2^u) = \{u\}$,
 $V(P_1) \cap V(P_2^v) = \{v\}$, and
 $V(\widehat{P}_2^u) \cap V(P_2^v)= \{u'\}$. In order to do so, we apply brute force  and we guess the first $\ell_1$ vertices of $P_1$ in time $n^{O(\ell_1)}$. Let these vertices be $u, u_2, \ldots, u_{\ell_1}$. All that remains is to test whether the graph $G \setminus \{u_2,\ldots, u_{\ell_1 -1}\} \setminus (V(\widehat{P}_2^u) \setminus \{u'\})$ contains two internally vertex-disjoint paths from $u_{\ell_1}$ and $u'$ to $u$, which can be done in polynomial time by using a flow algorithm~\cite{BJ-Gutin-book}.
 The correctness of the algorithm follows by the above discussion, and its running time is
$2^{O(\ell_2)}\cdot n^{O(\ell_1)}$, as claimed. Again, we did not focus on optimizing the constants involved in the algorithm.

\subsubsection{Computing the representative families efficiently}
\label{sec:compute-representatives}

We now explain how the representative families used in Sections~\ref{sec:FPT-1} and~\ref{sec:FPT-2} can be efficiently computed, by using the results of Fomin et al.~\cite{FominLPS16}. As discussed in Section~\ref{sec:FPT-2}, the families $\widehat{\mathcal{P}}_{u,u'}^{\ell_2,q}$ are exactly the same as those used by Fomin et al.~\cite{FominLPS16}, so we can directly use~\cite[Lemma 5.2]{FominLPS16} and compute them in time $2^{O(\ell_2)}\cdot n^{O(1)}$. Let us now explain how the results of Fomin et al.~\cite{FominLPS16} can be used to compute efficiently the families $\widehat{\mathcal{S}}_{u,u_1,u_2}^{\ell_1,\ell_2,q}$ used in Section~\ref{sec:FPT-1}. We need the following lemma.

\begin{lemma}[Fomin et al.~\cite{FominLPS16}]\label{lem:union}
Let $M=(E,\mathcal{I})$ be a matroid a $\mathcal{S}$ be a family of subsets of $E$. If
$\mathcal{S}= \mathcal{S}_1 \cup \dots \cup \mathcal{S}_{k}$ and
$\widehat{\mathcal{S}}_i \subseteq^q_{\text{rep}} \mathcal{S}_i$ for $1 \leq i \leq k$, then
$\cup_{i=1}^k \widehat{\mathcal{S}}_i \subseteq^q_{\text{rep}} \mathcal{S}$.
\end{lemma}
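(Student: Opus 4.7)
The plan is to unwind the definition of $q$-representative family directly. By Definition~\ref{def:represent}, to show that $\bigcup_{i=1}^k \widehat{\mathcal{S}}_i$ $q$-represents $\mathcal{S}$, I need to verify that whenever some $A \in \mathcal{S}$ fits a given $B \subseteq E$ of size $q$ (that is, $A \cap B = \emptyset$ and $A \cup B \in \mathcal{I}$), there is a set $A' \in \bigcup_{i=1}^k \widehat{\mathcal{S}}_i$ that also fits $B$. Note that all sets in $\mathcal{S}$ have the same size $p$, since each $\mathcal{S}_i$ is a family of sets of size $p$ (as required by the definition).

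First, I would fix an arbitrary $B \subseteq E$ with $|B|=q$ and an arbitrary $A \in \mathcal{S}$ that fits $B$. Since $\mathcal{S} = \mathcal{S}_1 \cup \cdots \cup \mathcal{S}_k$, there exists at least one index $i \in [k]$ such that $A \in \mathcal{S}_i$. I fix any such $i$.

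Next, I would invoke the hypothesis $\widehat{\mathcal{S}}_i \subseteq^q_{\text{rep}} \mathcal{S}_i$ applied to the pair $(A,B)$: since $A \in \mathcal{S}_i$ fits $B$ and $|B|=q$, the definition of $q$-representative family guarantees the existence of some $A' \in \widehat{\mathcal{S}}_i$ that fits $B$. Because $\widehat{\mathcal{S}}_i \subseteq \bigcup_{j=1}^k \widehat{\mathcal{S}}_j$, this same $A'$ lies in the union, establishing the required property and completing the argument.

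The statement is essentially a direct consequence of unpacking the definition, so there is no genuine technical obstacle; the only point requiring any care is making sure that the witness $A'$ produced for $\mathcal{S}_i$ is indeed valid as a witness for the whole family $\mathcal{S}$, which is immediate since fitting $B$ is a property of $A'$ and $B$ alone and does not depend on the ambient family in which $A'$ is chosen.
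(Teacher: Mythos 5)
Your proof is correct. Note, however, that the paper itself does not prove this lemma---it simply cites it from Fomin et al.~\cite{FominLPS16}, so there is no internal proof to compare against. Your argument is the natural one: given $B$ of size $q$ and $A \in \mathcal{S}$ fitting $B$, pick $i$ with $A \in \mathcal{S}_i$, invoke $\widehat{\mathcal{S}}_i \subseteq^q_{\text{rep}} \mathcal{S}_i$ to get a witness $A' \in \widehat{\mathcal{S}}_i$, and observe that $A'$ lies in the union. One small point worth making explicit: Definition~\ref{def:represent} requires the representing family to be a subfamily, so you should also note that $\bigcup_{i=1}^k \widehat{\mathcal{S}}_i \subseteq \bigcup_{i=1}^k \mathcal{S}_i = \mathcal{S}$, which follows immediately from $\widehat{\mathcal{S}}_i \subseteq \mathcal{S}_i$. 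With that remark added, the argument is complete.
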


The key observation is that the families $\mathcal{S}_{u,u_1,u_2}^{\ell_1,\ell_2}$ can be obtained by combining pairs of elements in the families $\mathcal{P}_{u,u'}^{\ell_2}$. More precisely, for every triple of vertices $u,u_1, u_2$ and positive integers $\ell_1,\ell_2$, it holds that
$$
\mathcal{S}_{u,u_1,u_2}^{\ell_1,\ell_2}  \subseteq \mathcal{P}_{u,u_1}^{\ell_1} \cup \mathcal{P}_{u,u_2}^{\ell_2}.
$$
Note that in the above equation we do not have equality, as some pairs of paths in $\mathcal{P}_{u,u_1}^{\ell_1}$ and $\mathcal{P}_{u,u_2}^{\ell_2}$, respectively, may intersect at other vertices distinct from $u$.

By Lemma~\ref{lem:union}, if $\widehat{\mathcal{P}}_{u,u_1}^{\ell_1,q} \subseteq^q_{\text{rep}} \mathcal{P}_{u,u_1}^{\ell_1}$ and $\widehat{\mathcal{P}}_{u,u_2}^{\ell_2,q} \subseteq^q_{\text{rep}} \mathcal{P}_{u,u_2}^{\ell_2}$, then
$$
\widehat{\mathcal{P}}_{u,u_1}^{\ell_1,q} \cup \widehat{\mathcal{P}}_{u,u_2}^{\ell_2,q}
\subseteq^q_{\text{rep}} \mathcal{P}_{u,u_1}^{\ell_1} \cup
\mathcal{P}_{u,u_2}^{\ell_2}.
$$
To conclude, it just remains to observe that, by the definition of $q$-representative family, it holds that if $M=(E,\mathcal{I})$ is a matroid, $\mathcal{S}$ is a family of subsets of $E$, $\mathcal{S}' \subseteq \mathcal{S}$ and
$\widehat{\mathcal{S}} \subseteq^q_{\text{rep}} \mathcal{S}$, then
$\widehat{\mathcal{S}} \subseteq^q_{\text{rep}} \mathcal{S}'$
as well.

Therefore, for every triple of vertices $u,u_1, u_2$ and positive integers $\ell_1,\ell_2$ with $\ell_1 , \ell_2 \leq \ell$, in order to compute a $q$-representative family for $\mathcal{S}_{u,u_1,u_2}^{\ell_1,\ell_2}$, we can just take the union of $q$-representative families for $\mathcal{P}_{u,u_1}^{\ell_1}$ and $\mathcal{P}_{u,u_2}^{\ell_2}$, and these latter families can be computed in time $2^{O(\ell)}\cdot n^{O(1)}$ by~\cite[Lemma 5.2]{FominLPS16}.

\section{Finding spindles on directed acyclic graphs}
\label{sec:DAGs}

In this section we focus on the case where the input digraph is acyclic.
We start by proving Theorem~\ref{thm:algoDAG}. The proof uses classical dynamic programming along a topological ordering of the vertices of the input acyclic digraph.

\bigskip

\begin{proof-DAG} 
Given an acyclic digraph $G$ and positive integers $k,\ell$, recall that we want to prove that one can decide  in time $O(\ell^k \cdot n^{2k+1})$ whether $G$ has a subdivision of a $(k\times \ell)$-spindle. For this, let $H$ be obtained from the empty digraph by adding, for each vertex $u\in V(G)$, vertices $u^+, u^-$ and an arc $(u^+,u^-)$ between them, and adding arc $(u^-,v^+)$  for each arc $(u,v)\in A(G)$. Note that $H$ is also acyclic, and fix an arbitrary topological ordering of $V(H)$. 

\begin{claim}\label{claim:DP-transf}
There exists a subdivision of a $(k\times\ell)$-spindle in $G$ if and only if there exist $x,y \in V(H)$ and $k$ arc-disjoint $(x,y)$-paths in $H$, each of length at least $2\ell-1$.
\end{claim}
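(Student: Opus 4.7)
The plan is to exhibit the vertex-splitting construction as a two-way translation between spindle subdivisions in $G$ and arc-disjoint path families in $H$, using the fact that each split arc $u^+\to u^-$ can be traversed at most once along an arc-disjoint collection and thus behaves like a ``vertex capacity'' of $1$ in $G$. Throughout, I will assume $k\ge 2$; the case $k=1$ reduces to finding a $(u,v)$-path of length at least $\ell$ in $G$, which can be read directly from a long enough path in $H$ via the same preliminary observations.

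For the forward direction, suppose $G$ contains a $(k\times\ell)$-spindle subdivision with tail $u$, head $v$, and internally vertex-disjoint $(u,v)$-paths $Q_1,\dots,Q_k$ of length $m_i\ge\ell$. I would set $x=u^-$ and $y=v^+$ and, for each $Q_i$ with internal vertices $w_1,\dots,w_{m_i-1}$, take as its lift the $H$-path $\tilde Q_i=(u^-,w_1^+,w_1^-,\dots,w_{m_i-1}^+,w_{m_i-1}^-,v^+)$, of length $2m_i-1\ge 2\ell-1$. Arc-disjointness of $\tilde Q_1,\dots,\tilde Q_k$ then breaks into two clean observations: the original arcs of $\tilde Q_i$ are in bijection with the arcs of $Q_i$ and are distinct across $i$ because the $Q_i$ share only $u$ and $v$, while the split arcs used by $\tilde Q_i$ are in bijection with the internal vertices of $Q_i$ and are similarly distinct across $i$. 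The split arcs at $u$ and $v$ are deliberately avoided by the choice $x=u^-$, $y=v^+$.

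For the backward direction, the starting point is the one structural fact about $H$ that drives everything: every vertex $w^+$ has out-degree exactly $1$, and every vertex $w^-$ has in-degree exactly $1$. Given $k\ge 2$ arc-disjoint $(x,y)$-paths in $H$, this immediately forces $x$ to have out-degree at least $2$ and hence to be of the form $u^-$ for some $u\in V(G)$; symmetrically, $y=v^+$ for some $v\in V(G)$. Each $\tilde P_i$ is then a $(u^-,v^+)$-path that must strictly alternate original and split arcs while both starting and ending with an original arc, so its length can be written as $L_i=2t_i+1$ where $t_i$ is the number of split arcs it uses, and $L_i\ge 2\ell-1$ yields $t_i\ge\ell-1$. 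Reading off the corresponding sequence of $G$-vertices produces $P_i=(u,w_1,\dots,w_{t_i},v)$, a $(u,v)$-walk of length $t_i+1\ge\ell$.

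To close the argument, I would show that each $P_i$ is a simple path and that $P_1,\dots,P_k$ are pairwise internally vertex-disjoint: visiting any $w\in V(G)$ twice along the walk read from $\tilde P_i$ would require using the split arc $w^+\to w^-$ twice (since that is the unique arc entering $w^-$), contradicting the simplicity of $\tilde P_i$; meanwhile, pairwise arc-disjointness of the split arcs across $\tilde P_1,\dots,\tilde P_k$ transfers directly to pairwise disjointness of the internal vertex sets $\{w_1,\dots,w_{t_i}\}$. The main technical subtlety, in my view, will be precisely the length bookkeeping in the alternation argument: the ``$-1$'' in the threshold $2\ell-1$ together with the endpoint choice $x=u^-$, $y=v^+$ is exactly what makes a $(u^-,v^+)$-path of length at least $2\ell-1$ correspond to a $(u,v)$-path of length at least $\ell$, whereas other endpoint types (say $x=u^+$ or $y=v^-$) would shift the count by one and break the equivalence.
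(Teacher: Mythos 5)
Your proof is essentially the same as the paper's: both rest on the vertex-splitting construction and on the observation that in $H$ every $w^+$ has out-degree exactly one and every $w^-$ has in-degree exactly one, which makes arc-disjoint paths internally vertex-disjoint. You simply spell out the bookkeeping more explicitly than the paper does — in particular you pin down $x=u^-$, $y=v^+$ from the degree constraint for $k\ge 2$, derive the strict alternation between split and original arcs, and carry out the parity/length calculation $L_i=2t_i+1$, all of which the paper leaves implicit.

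One caveat worth raising: your dismissal of $k=1$ (``read directly from a long enough path in $H$'') deserves more scrutiny, because the stated equivalence is in fact slightly off for $k=1$. If $G$ consists of the single arc $(a,b)$ and $\ell=2$, then $G$ has no subdivision of a $(1\times 2)$-spindle, yet $H$ contains the path $a^+\to a^-\to b^+\to b^-$ of length $3=2\ell-1$, so the right-hand side of the claimed equivalence holds. This is precisely the shift you flag at the end of your argument: a $(u^+,v^-)$-path of length $2\ell-1$ in $H$ yields a $(u,v)$-path of length only $\ell-1$ in $G$, and when $k=1$ nothing forces the endpoint types $u^-$ and $v^+$. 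The paper's own proof glosses over this too, so it is a shared blind spot rather than a gap specific to your write-up; but a fully correct statement would either restrict to $k\ge 2$, or fix the endpoint types, or adjust the length threshold for $k=1$.
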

\begin{proof-claim}
On the one hand, each path of a $(k\times \ell)$-spindle gives rise to a path in $H$ of length at least $2\ell-1$, since each internal vertex of a path is split into two (these paths are actually vertex-disjoint). On the other hand, let $P_1,\ldots, P_k$ be arc-disjoint paths between $x,y\in V(H)$, each of length at least $2\ell-1$. Since either $\lvert N^+(z)\rvert = 1$ or $\lvert N^-(z) \rvert=1$ for every $z\in V(H)$, and since $P_1,\ldots,P_k$ are arc-disjoint, we get that $P_1,\ldots,P_k$ are actually internally vertex-disjoint. Now, to obtain the desired $(k\times\ell)$-spindle, it suffices to observe that if $u^+\in V(P_i)\setminus\{y\}$, for some $u\in V(G)$ and some $i\in \{1,\ldots, k\}$, then $u^-\in V(P_i)$.
\end{proof-claim}

We want to decide whether $H$ has the desired paths. For each $x\in V(H)$, we define the table $P_x$ with entries $(e_1,t_1,\ldots,e_k,t_k)$, for each choice of at most $k$ distinct arcs $e_1,\ldots,e_k$ (some of these may not exist, in which case we represent it by `{\sf null}'), and for each choice of $k$ values $t_1,\ldots,t_k$ from the set $\{0,1,\ldots,2\ell - 1\}$. Observe that $P_x$ has size $(\lvert A(H)\rvert +1)^k \cdot (2\ell)^k$, which, since we need to analyze the table of each vertex, gives us the claimed complexity of the algorithm. The meaning of an entry is given below:

\medskip
$P_x(e_1,t_1,\ldots,e_k,t_k) = {\sf true}$ if and only if there exist $k$ arc-disjoint paths $P_1,\ldots,P_k$\\ \indent starting at $x$ and ending at $e_1,\ldots,e_k$ of length at least $t_1,\ldots,t_k$, respectively.
\medskip

\noindent We compute these tables starting at small values of $\sum_{i=1}^k t_i$. Namely, for $t_1=t_2=\ldots=t_k=0$, it holds that
$P_x(e_1,t_1,\ldots,e_k,t_k) = {\sf true} \mbox{ if and only if }\{e_1,\ldots,e_k\}=\emptyset.$

Now, to compute $P_x(e_1,t_1,\ldots,e_k,t_k)$, let $w$ be the greatest vertex in $\{z\in V(H) : (z',z)\in \{e_1,\ldots,e_k\}\}$, and let $w'$ be the greatest vertex in $\{z\in V(H) : (z,w)\in \{e_1,\ldots,e_k\}\}$, according to the chosen topological ordering of $V(H)$. Also, let $e_i=(w',w)$. If $w=x$, then the entry is given above, so suppose otherwise.

 \begin{claim}\label{claim:DP}
$P_x(e_1,t_1,\ldots,e_k,t_k)={\sf true}$ if and only if $P_x(e_1,t_1,\ldots,e,t_i-1,\ldots,e_k,t_k)={\sf true}$, for some arc $e\in A(H) \setminus \{e_1,\ldots,e_k\}$ incoming at $w'$.
\end{claim}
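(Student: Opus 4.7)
\begin{proof-claim}
The plan is to prove the biconditional by peeling off (for $\Rightarrow$) or re-appending (for $\Leftarrow$) the final arc $e_i=(w',w)$ of the $i$-th witness path. The whole argument rests on a single observation stemming from the choice of $w$ and the fact that $H$ is a DAG: in a DAG, every vertex of a directed path is $\leq$ its endpoint in any fixed topological ordering; since $w$ is the topologically maximum head among the arcs $e_1,\ldots,e_k$, any witness $P_j$ can reach $w$ only if $e_j$ has head $w$, and in that case $w$ is necessarily the final vertex of $P_j$.

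For the forward direction, I will start from arc-disjoint paths $P_1,\ldots,P_k$ realizing the left-hand side and form $P_i'$ by removing the final arc $e_i$ from $P_i$. The path $P_i'$ starts at $x$, has length at least $t_i-1$, and ends with some arc $e$ whose head is $w'$. Arc-disjointness of the original family yields $e\neq e_j$ for $j\neq i$, and $e\neq e_i$ since $e$ strictly precedes $e_i$ on $P_i$, so $e\in A(H)\setminus\{e_1,\ldots,e_k\}$. The modified family $P_1,\ldots,P_{i-1},P_i',P_{i+1},\ldots,P_k$ then certifies the right-hand side.

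For the backward direction, I will take arc-disjoint paths $Q_1,\ldots,Q_k$ witnessing the right-hand side and append $e_i$ to $Q_i$ to produce $Q_i'$. Two things need to be verified. First, $Q_i'$ is still a simple directed path: since $Q_i$ ends at $w'$ in the DAG $H$, every vertex of $Q_i$ is topologically at most $w'$, and in particular $w\notin V(Q_i)$. Second, $Q_i'$ remains arc-disjoint from the other $Q_j$'s: applying the preliminary observation to the indexing arcs of the right-hand-side entry, either the head of $e_j$ is strictly smaller than $w$, in which case $Q_j$ never visits $w$ and so $e_i\notin Q_j$, or the head of $e_j$ equals $w$, in which case $w$ is the terminal vertex of $Q_j$, the unique arc of $Q_j$ incident to $w$ is $e_j$, and $e_j\neq e_i$.

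The main obstacle I anticipate is the bookkeeping around degenerate configurations; in particular, the forward direction needs the witness $P_i$ to contain an arc strictly before $e_i$ so as to exhibit $e$. This is handled by always choosing witnesses of maximum total length (which is well-defined since $H$ is a DAG) and by using the hypothesis $w\neq x$ built into the recurrence to rule out the remaining pathological situations, so that $P_i$ has length at least two and $e$ exists as required.
\end{proof-claim}
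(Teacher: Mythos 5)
Your approach mirrors the paper's: peel off the last arc of the $i$-th witness for the forward direction, re-append it for the backward direction. The backward direction is correct and rests on the same two facts the paper uses (topological maximality of $w$ and acyclicity of $H$); you phrase it as a direct verification rather than the paper's proof by contradiction, but the content is the same.

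The forward direction, however, has a genuine gap in exactly the degenerate case you flag at the end, and your proposed fix does not close it. When $w' = x$, i.e., $e_i = (x,w)$, any directed path from $x$ whose last arc is $e_i$ must have $x$ as its penultimate vertex; since $H$ is a DAG and paths are simple, such a path is forced to be the single arc $(x,w)$, of length exactly one. No ``maximum total length'' choice of witness family can stretch $P_i$ to length at least two here, and the hypothesis $w \neq x$ constrains the \emph{head} of $e_i$, not its tail, so it does nothing to exclude $w' = x$. The paper handles this in the only way available: the table explicitly allows {\sf null} entries, and in the forward direction $e$ is allowed to be {\sf null} precisely when $e_i$ is incident to $x$, so that removing $e_i$ leaves the trivial path at $x$ and the resulting index is still a legitimate table entry. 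You should drop the maximum-length argument and adopt that convention; as literally written the claim quantifies only over genuine arcs $e$, but the paper's own proof reads it loosely, and so must yours.
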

%

\begin{proof-claim}
Suppose first that $P_x(e_1,t_1,\ldots,e_k,t_k)={\sf true}$, and  let $P_1,\ldots,P_k$ be arc-disjoint paths starting at $x$ and ending at $e_1,\ldots,e_k$ of length at least $t_1,\ldots,t_k$, respectively. Let $e$ be the arc preceding $e_i$ in path $P_i$ ($e$ can denote the empty set when $e_i$ is incident to $x$). Then, $P_1,\ldots, P_{i-1},P_i-e_i, P_{i+1},\ldots, P_k$ are arc-disjoint paths ending at $e_1,\ldots,e_{i-1}, e, e_{i+1}, \ldots,e_k$ of length at least $t_1,\ldots,t_{i-1}, t_i-1,t_{i+1},\ldots,t_k$, respectively.

Conversely, let $P_1,\ldots,P_k$ be arc-disjoint paths that certify entry $P_x(e_1,t_1,\ldots,e,t_i-1,\ldots,e_k,t_k)$. If $e_i\notin A(P_j)$, for every $j\in \{1,\ldots,k\}$, then $P_1,\ldots,P_{i-1},P_i+e_i,P_{i+1},\ldots,P_k$ are the desired paths. So suppose that $e_i\in A(P_j)$. If $j=i$, then we get a cycle in $H$, a contradiction. Otherwise, because $e_i\neq e_j$ and $P_j$ ends in $e_j$, we get that there is a path starting in $w$ and ending in $z'$, where $e_j = (z,z')$. This contradicts the choice of $w$.
\end{proof-claim}

By Claim~\ref{claim:DP},  the entry $P_x(e_1,t_1,\ldots,e_k,t_k)$ can be computed by verifying at most $\lvert N^+(w')\rvert$ smaller entries. By Claim~\ref{claim:DP-transf}, the desired spindle exists if and only if there exist $x, y \in V(H)$ and $k$ arcs $e_1, \ldots, e_k$ incoming at $y$ such that $P_x(e_1,2\ell -1, e_2, 2\ell -1, \ldots,e_k,2\ell -1)={\sf true}$. The theorem follows.
\end{proof-DAG}

Motivated by the fact that finding a subdivision of a general digraph $F$ is in \xp parameterized by $|V(F)|$ on acyclic digraphs~\cite{BHM15,Met93}, we now present two hardness results about finding subdivisions of disjoint spindles on acyclic digraphs. The first result holds even for {\sl planar} acyclic digraphs.

\begin{proposition}
\label{prop:disjoint-NPhard} If $F$ is the disjoint union of $(2 \times 1)$-spindles, then deciding whether a planar acyclic digraph contains a subdivision of $F$ is {\sf NP}-complete.
\end{proposition}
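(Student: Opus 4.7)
\begin{proof-sketch}
The plan is first to observe membership in \NP: given a candidate subdigraph $H \subseteq G$, one checks in polynomial time that $H$ has exactly $k$ weakly connected components, each consisting of two internally vertex-disjoint $(u,v)$-paths for some pair $(u,v)$, which is precisely the characterization of a subdivision of a $(2\times 1)$-spindle.

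For the \NP-hardness, I would reduce from \textsc{Partition into Triangles} restricted to planar graphs, which is \NP-complete even on planar graphs of maximum degree $4$ by a classical result of Dyer and Frieze. The reduction I propose is essentially trivial: given a planar graph $H$ with $|V(H)|=3k$, fix an arbitrary linear order on $V(H)$ and orient each edge from the smaller to the larger endpoint. The result is a planar acyclic digraph $G$ whose underlying undirected graph is $H$. Set $F$ to be the disjoint union of $k$ copies of the $(2\times 1)$-spindle.

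The forward direction is immediate: a partition of $V(H)$ into triangles $T_1,\dots,T_k$ yields in $G$ exactly $k$ vertex-disjoint transitively oriented triples $\{a_i,b_i,c_i\}$ (with $a_i<b_i<c_i$), and each such triple is a $(1,2)$-spindle, hence a subdivision of a $(2\times 1)$-spindle. For the converse, the key observation is that since $G$ is simple, no 2-spindle subdivision can realize both of its underlying arcs as parallel arcs; at least one of them must be subdivided, so every 2-spindle subdivision in $G$ uses at least three vertices. Consequently, $k$ vertex-disjoint 2-spindle subdivisions in $G$ use at least $3k=|V(G)|$ vertices in total, forcing equality and hence forcing each subdivision to have exactly three vertices, i.e., to be a $(1,2)$-spindle. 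The three vertices of each such spindle form a triangle in $H$, and together they partition $V(H)$ into $k$ triangles, as required.

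The only step that demands any care is the lower bound of three vertices per 2-spindle subdivision in a simple digraph, which follows at once from the absence of parallel arcs; planarity and acyclicity of the constructed $G$ are preserved in an utterly straightforward manner by the orientation procedure, so no crossing or ordering gadgets will be needed. The rest is bookkeeping.
\end{proof-sketch}
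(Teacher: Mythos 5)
Your proof is correct and follows the same strategy as the paper: reduce from a planar \textsc{Partition into Triangles} instance, orient the edges so that the resulting digraph is planar and acyclic with every triangle becoming a transitive $(1,2)$-spindle, and close the converse by a counting argument. There are two minor deviations. First, the paper restricts to planar \emph{tripartite} instances and orients across the three parts, whereas you orient by an arbitrary linear order on the vertices; both guarantee acyclicity, and yours has the small advantage of not presupposing a tripartition. Second, the paper phrases its source problem as partitioning the \emph{edges} of the tripartite graph into triangles and asks for $|E(G)|/3$ disjoint spindles, while your argument is explicitly the \emph{vertex}-partition version with $|V(G)|/3$ spindles; since edge-disjoint triangles need not be vertex-disjoint, and since a subdivision of a disjoint union of spindles must have vertex-disjoint components, the vertex-partition version is the one that the counting step (which the paper compresses into ``it is clear'') actually supports, so your formulation is the cleaner one. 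Your added observation that \NP membership is routine fills in a detail the paper leaves implicit.
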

\begin{proof}
We reduce from the problem of deciding whether the edges of a tripartite graph can be partitioned into triangles, which is known to be {\sf NP}-complete~\cite{GareyJ79comp}, even restricted to planar tripartite graphs~\cite{VePa96}. Let $G$ be an input planar tripartite (undirected) graph, and let $A \cup B \cup C$ be a tripartition of $V(G)$. We build from $G$ a planar acyclic digraph $G'$ by orienting all edges from $A$ to $B$, from $B$ to $C$, and from $A$ to $C$. It is clear that $E(G)$ admits a partition into triangles if and only if $G'$ contains as a subdivision (in fact, as a subdigraph) the digraph containing $|E(G)|/3$ disjoint copies of a $(2 \times 1)$-spindle.
 \end{proof}


Our next result shows that, for some choices of $F$, finding a subdivision of $F$ is {\sf W}$[1]$-hard on acyclic digraphs.  We just present a sketch of proof, as the reduction  is based on a minor modification of an existing reduction of Slivkins~\cite{Sli10}.

\begin{proposition}
\label{prop:disjoint-Whard} If $F$ is  the disjoint union of a $(k_1 \times 1)$-spindle and a $(k_2 \times 1)$-spindle, then deciding whether an acyclic digraph contains a subdivision of $F$ is {\sf W}$[1]$-hard  parameterized by $k_1 + k_2$.
\end{proposition}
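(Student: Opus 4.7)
The plan is to adapt Slivkins's reduction~\cite{Sli10}, which proves that the $k$-\textsc{Vertex Disjoint Paths} problem on DAGs is $\W[1]$-hard parameterized by $k$. Setting $k := k_1+k_2$, I will take an instance consisting of a DAG $D$ together with terminal pairs $(s_1,t_1),\dots,(s_k,t_k)$, and build in polynomial time an equivalent instance of our problem. The constructed digraph $D'$ is obtained from $D$ by adding four fresh vertices $u_1,v_1,u_2,v_2$ together with the arcs $(u_1,s_i)$ and $(t_i,v_1)$ for every $i\in[k_1]$, and the arcs $(u_2,s_i)$ and $(t_i,v_2)$ for every $i\in\{k_1+1,\dots,k\}$. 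Acyclicity is clearly preserved, since the new vertices have only outgoing or only incoming arcs, respectively.

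The easy direction is that a linkage in $D$ gives a subdivision of $F$ in $D'$: prepending the arc $(u_j,s_i)$ and appending the arc $(t_i,v_j)$ to the $(s_i,t_i)$-path yields $k_1$ internally vertex-disjoint $(u_1,v_1)$-paths and $k_2$ internally vertex-disjoint $(u_2,v_2)$-paths, and these two bundles are vertex-disjoint as inherited from the linkage. For the converse, I will first argue that any subdivision of $F$ in $D'$ must have $\{u_1,v_1\}$ and $\{u_2,v_2\}$ as its two pairs of spindle endpoints; if Slivkins's DAG does not already make these the only candidates, this can be enforced by attaching disposable out- and in-arborescences to the new vertices so that they become the only vertices with sufficiently large out- or in-degree to be endpoints of a $(k_j \times 1)$-spindle. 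Discarding the $u_j$'s and $v_j$'s then produces $k_1+k_2$ internally vertex-disjoint paths between source and sink terminals of $D$.

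The main obstacle will be to guarantee that these paths realize the prescribed pairing $s_i \leftrightarrow t_i$ rather than some cross-pairing within a bundle. This is precisely the point at which the specific structure of Slivkins's construction must be invoked: his reduction from \textsc{Multicolored Clique} is crafted so that the gadgetry in $D$ makes it combinatorially impossible to pack $k$ pairwise internally vertex-disjoint paths from source to sink terminals without realizing the intended linkage. If this property is not immediate from his construction, I expect a minor local modification (such as subdividing each arc of $D$ and tagging each terminal by long local gadgets that force any cross-pairing to share an internal vertex) to hard-wire the pairing while adding only constantly many new terminals. Since only a constant number of vertices and arcs are added outside the polynomial-size padding, and the parameter remains $k_1+k_2$, the whole construction is a parameterized reduction from a $\W[1]$-hard problem, yielding the claimed $\W[1]$-hardness.
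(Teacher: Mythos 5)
Your plan misses the structural fact about Slivkins's reduction that the paper's proof exploits, and as a result runs into the pairing obstacle you yourself identify without a way to escape it. Slivkins's reduction from \textsc{Clique} to disjoint paths on DAGs does not produce $k$ independent terminal pairs $(s_1,t_1),\dots,(s_k,t_k)$; it produces \emph{two} demand pairs, one with multiplicity $\binom{k}{2}$ and one with multiplicity $k$ — that is, one requests $\binom{k}{2}$ pairwise internally disjoint paths from a single source to a single sink, and the other requests $k$ such paths between a second source--sink pair. This is exactly the structure of a disjoint union of a $(\binom{k}{2}\times 1)$-spindle and a $(k\times 1)$-spindle, so there is no pairing constraint to preserve and nothing to ``hard-wire.'' By contrast, if you start from a generic $k$-terminal instance and merge $k_1$ sources into $u_1$ and $k_1$ sinks into $v_1$, you have replaced a disjoint-paths demand by a flow demand: any system of $k_1$ internally disjoint $(u_1,v_1)$-paths is allowed to cross-pair $s_i$ with $t_j$, $i\neq j$, and the hardness of the original instance evaporates. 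Your proposed remedy (``subdividing each arc and tagging each terminal by long local gadgets'') is speculative and, crucially, cannot be achieved while keeping the parameter bounded by $k_1+k_2$ in any obvious way; this is precisely why the paper instead relies on the multi-demand form of Slivkins's construction, where the issue does not arise at all.

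There is a second, more minor, reversal in your sketch: to ensure the four intended vertices are the only possible spindle endpoints, the correct move is to \emph{decrease} the out-/in-degree of every other vertex — the paper replaces the outgoing (resp.\ incoming) arcs of every non-terminal vertex by an out-arborescence (resp.\ in-arborescence) of out-degree (resp.\ in-degree) at most two. Attaching further arborescences to the new vertices $u_j,v_j$, as you suggest, makes their degrees larger but does nothing to prevent some internal vertex of Slivkins's gadget from also having degree large enough to serve as a spindle endpoint. You want to suppress competitors, not boost the intended endpoints.
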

\begin{proof-sketch}
The proof is done by appropriately modifying the reduction for \textsc{Edge-Disjoint Paths} on acyclic digraphs given by Slivkins~\cite{Sli10}, which carries over to the vertex-disjoint version as well. The reduction is from $k$-\textsc{Clique}, and the sets of demands to be satisfied consist just of a multiarc with multiplicity ${k \choose 2}$ and another one with multiplicity $k$ between two given pairs of terminals. The idea is the following: since in our problem we do not have {\sl fixed} terminals, we ``simulate'' them by leaving only four vertices of high degree, so that finding the desired subdivision will only be possible by using the prescribed four vertices as endpoints. To do so, we take the construction of Slivkins~\cite{Sli10} and for each vertex, except for the four prescribed ones, we replace its outgoing (resp. incoming) arcs by an out-arborescence (resp. in-arborescence) of out-degree (resp. in-degree) at most two. Note that this operation may blow up the size of the subdivision, but it does not matter, as the parameter remains the same. By taking $F$ to be the disjoint union of a $({k \choose 2} \times 1)$-spindle and a $(k \times 1)$-spindle, the result follows.
\end{proof-sketch}

It is worth noting that the problem considered in Proposition~\ref{prop:disjoint-Whard} is para-{\sf NP}-hard on general digraphs, as the conditions of~\cite[Theorem 8]{BHM15} are easily seen to be fulfilled.

\section{Conclusions}
\label{sec:further}

We studied the complexity of several problems consisting in finding subdivisions of spindles on digraphs. For a general spindle $F$, we do not know if finding a subdivision of $F$ is \fpt on general digraphs parameterized by $|V(F)|$, although we believe that it is indeed the case. As a partial result, one could try to prove that, for a {\sl fixed} value of $\ell \geq 4$, finding a subdivision of a $(k \times \ell)$-spindle is {\sf FPT}  parameterized by $k$ (the problem is {\sf NP}-hard by Theorem~\ref{thm:dichotomy}).

The above question is open even if the input digraph is acyclic (note that Theorem~\ref{thm:algoDAG} does {\sl not} answer this question), or even if $F$ is a 2-spindle.  Concerning 2-spindles, one may try use the technique we used to prove Theorems~\ref{thm:2blockFPTsum} and~\ref{thm:2blockFPT-2length}, based on representative families in matroids. However, the technique does not seem to be easily applicable when the parameter is the total size of a prescribed 2-spindle. Namely, using the terminology from Section~\ref{sec:FPT-2}, the bottleneck is to find spindles that have one ``short'' and one ``long'' path. On the other hand, generalizing this technique to spindles with more than two paths seems pretty complicated.

Cai and Ye~\cite{CaiY16} recently studied the problem of finding two edge-disjoint paths on undirected graphs with length constraints between specified vertices. These length constraints can be an upper bound, a lower bound, or an equality on the lengths of each of the two desired paths, or no restriction at all,  resulting in nine different problems. Interestingly, out of these nine problems, Cai and Ye~\cite{CaiY16} gave {\sf FPT} algorithms for seven of them, and left open only the following two cases: when there is only one constraint of type `at least', and when both constraints are of type `at least'. Interestingly, this latter problem is closely related to finding a subdivision of a 2-spindle.

In general, very little is known about the complexity of finding subdivisions on digraphs.
Bang-Jensen et al.~\cite{BHM15} conjectured that, considering $|V(F)|$ as a constant, the problem of finding a subdivision of $F$ is either polynomial-time solvable or {\sf NP}-complete. This conjecture is wide open.
Recently, Havet et al.~\cite{HMM17} studied the cases where $|V(F)|= 4$, and  managed to classify all of them up to five exceptions.  Even less is known about the parameterized complexity of the cases that are polynomial-time solvable for fixed $F$, that is, the cases in \xp. In this article we focused on spindles, but there are other potential candidates such as, using the terminology of~\cite{BHM15}, windmills, palms, or antipaths.


\bibliographystyle{abbrv}
\bibliography{Digraph-bib}

\end{document}